%% file: two_col.tex
\documentclass[journal]{IEEEtran}
\usepackage{graphicx}
\graphicspath{{../}{./figures/}}
\usepackage[cmex10]{amsmath}
\usepackage{amssymb}
\usepackage{bm}
\usepackage{algorithm}
\usepackage{algpseudocode}
\algrenewcommand\algorithmicrequire{\textbf{Input:}}
\algrenewcommand\algorithmicensure{\textbf{Output:}}
\usepackage{array}
\usepackage{url}
\usepackage[noadjust]{cite}
\usepackage{xcolor}
\usepackage{pgfplots}
\pgfplotsset{compat=1.17}
\usetikzlibrary{decorations.markings,decorations.pathmorphing,calc}
\usetikzlibrary{positioning}
\usepgfplotslibrary{groupplots}
\newcommand{\datapath}{figures/data/}
\newif\ifonecol   
\newlength{\figw}\newlength{\figwtwo}   
\newlength{\figh}\newlength{\fightwo}   
\newlength{\figwthree}\newlength{\fighthree}   
\AtBeginDocument{\setlength{\figw}{\columnwidth}\setlength{\figwtwo}{\columnwidth}\setlength{\figh}{0.47\figw}\setlength{\fightwo}{0.41\figwtwo}\setlength{\figwthree}{0.31\textwidth}\setlength{\fighthree}{0.33\textwidth}}
\input{figures/data/values.tex}
\input{figures/data/values_mc.tex}
\def\mcNdrop{500}   
\pgfplotsset{
  cidi/.style={width=0.98\figw, height=\figh, grid=major, grid style={gray!50, dotted},
    tick label style={font=\footnotesize}, label style={font=\footnotesize}, title style={font=\footnotesize},
    legend style={font=\scriptsize, fill=white, draw=black, legend cell align=left},
    every axis plot/.append style={line width=1pt}, mark size=1.7pt, mark options={solid}, unbounded coords=jump,
    enlarge x limits=false},
  c1/.style={blue, mark=*},
  c2/.style={red, mark=square*},
  c3/.style={black, mark=triangle*},
  c4/.style={green!50!black, mark=diamond*}}
\newcommand{\fbline}[2]{\draw[#1, dotted, line width=0.8pt] ({axis cs:#2,1}|-{rel axis cs:0,0}) -- ({axis cs:#2,1}|-{rel axis cs:0,1});}
\usepackage{subcaption}
\usepackage{booktabs}
\usepackage{multirow}
\usepackage{siunitx}
\usepackage{enumerate}
\usepackage{stfloats}
\newtheorem{lemma}{Lemma}
\newtheorem{corollary}{Corollary}
\newtheorem{proposition}{Proposition}
\newtheorem{remark}{Remark}

\allowdisplaybreaks

\renewcommand{\Re}{\operatorname{Re}}
\renewcommand{\Im}{\operatorname{Im}}
\newcommand{\tr}{\operatorname{tr}}
\newcommand{\diag}{\operatorname{diag}}
\newcommand{\rk}{\operatorname{rank}}
\newcommand{\bx}{\mathbf{x}}
\newcommand{\bh}{\mathbf{h}}
\newcommand{\ba}{\mathbf{a}}
\newcommand{\bpsi}{\boldsymbol{\psi}}
\newcommand{\PEB}{\mathrm{PEB}}

\begin{document}
\title{Secure Near-Field ISAC with Pinching-Antenna Systems: Exploiting Constructive-Destructive Interference}
\author{Dimitrios~Bozanis,~\IEEEmembership{Graduate Student Member,~IEEE,} Vasilis~K.~Papanikolaou,~\IEEEmembership{Member,~IEEE,} Sotiris~A.~Tegos,~\IEEEmembership{Senior Member,~IEEE,} 
Panagiotis~D.~Diamantoulakis,~\IEEEmembership{Senior Member,~IEEE,}
Christos~Masouros,~\IEEEmembership{Fellow,~IEEE,} and George~K.~Karagiannidis,~\IEEEmembership{Fellow,~IEEE}
\thanks{D. Bozanis, S. A. Tegos, P. D. Diamantoulakis, and G. K. Karagiannidis are with the Department of Electrical and Computer Engineering, Aristotle University of Thessaloniki, 54124 Thessaloniki, Greece (e-mails: dimimpoz@ece.auth.gr, tegosoti@ece.auth.gr, padiaman@ece.auth.gr, geokarag@ece.auth.gr).}
\thanks{V. K. Papanikolaou is with the Friedrich-Alexander-Universit\"at Erlangen-N\"urnberg, 91058 Erlangen, Germany (e-mail: vasilis.papanikolaou@fau.de).}
\thanks{C. Masouros is with the Department of Electronic and Electrical Engineering, University College London, London, U.K. (e-mail: c.masouros@ucl.ac.uk).}
}
\maketitle

\begin{abstract}
This paper investigates secure near-field integrated sensing and communication (ISAC) with pinching-antenna systems (PASS), where a base station serves legitimate users while localizing a target that may also eavesdrop on the transmitted data. We jointly design symbol-level precoding (SLP) and the transmit and receive pinching-antenna positions to minimize the Cartesian position error bound (PEB) averaged over the data symbols. Constructive interference ensures reliable reception at the legitimate users, whereas destructive interference drives the target's observation toward incorrect symbol regions without sacrificing the illumination required for sensing. We derive the PEB for unknown complex target reflectivity and establish an exact factorization into transmit-illumination and receive-geometry terms. This decomposition reveals the distinct sensing roles of the two apertures, enables separate receive-placement and joint transmit-placement/SLP optimization, and provides analytical insights into local position identifiability, the range-information limitations of compact arrays, sensing-oriented receive placement, and in-waveguide attenuation. Building on these results, we develop an efficient algorithm by combining geometry-informed receive initialization, one-dimensional transmit-position searches, and convex per-symbol precoding. Numerical results demonstrate sub-millimeter localization, erroneous symbol decisions at the target, an expanded feasible operating region, and substantial gains over fixed-placement PASS and fully digital extremely large-scale MIMO (XL-MIMO) benchmarks.
\end{abstract}

\begin{IEEEkeywords}
Pinching-antenna systems, integrated sensing and communication, physical-layer security, symbol-level precoding, constructive interference, Cram\'er--Rao bound, near-field localization.
\end{IEEEkeywords}

\IEEEpeerreviewmaketitle

\section{Introduction}
Integrated sensing and communication (ISAC) is envisioned as a key capability of sixth-generation (6G) wireless networks, enabling data transmission and environmental sensing through shared spectrum, hardware, and signal processing resources \cite{Liu2022ISAC}. Communication must deliver information with a prescribed quality of service (QoS), while sensing requires effective target illumination and accurate parameter estimation. Sharing power and spatial degrees of freedom therefore creates a communication--sensing trade-off \cite{Liu2018MUMIMO,Liu2022CRB}.

This trade-off becomes more demanding when the sensing target can also intercept the transmitted data. The information-bearing waveform used to illuminate the target then exposes confidential symbols to a potential eavesdropper (Eve). Increasing illumination may improve sensing while strengthening the signal available for interception. Thus, security becomes a third requirement, which physical-layer security (PLS) meets by preserving reliable communication and accurate sensing while impairing the target's reception of confidential information \cite{Wei2022security}.

In conventional compact arrays, however, the approximately planar wavefront makes the spatial phase differences primarily sensitive to target direction, with limited information about distance. A sufficiently large aperture can exploit wavefront curvature to recover both parameters, enabling near-field localization \cite{NF}. To this end, pinching-antenna systems (PASS) offer a flexible means of realizing such an aperture through reconfigurable radiating and receiving points distributed along dielectric waveguides \cite{Liu2025tutorial}. Their extended spatial coverage can thus support position estimation even at distances where a compact array would operate in the far field.

\subsection{State of the Art}
Secure ISAC has been investigated through artificial noise (AN)-assisted beamforming, transmit covariance optimization, and joint communication, sensing, and jamming designs \cite{Su2021malicious,Su2024sensingassisted,Papanikolaou2026privacy}. However, allocating power to AN reduces the power available for information-bearing transmission, and AN does not guarantee an erroneous decision at the eavesdropper. Symbol-level precoding (SLP) offers an alternative by using the data symbols and channel knowledge to shape interference itself, making it beneficial for legitimate reception and harmful for eavesdropping. Constructive interference (CI) places each legitimate user's (Bob's) observation within the decision region of its intended symbol with a prescribed margin \cite{Masouros2011,Li2020tutorial}, whereas destructive interference (DI) impairs symbol detection at an unauthorized receiver (Eve) \cite{Khandaker2018}. This distinction is attractive for secure ISAC, since strong illumination of a target must not allow a favorable symbol decision for eavesdropping. Interference exploitation has been applied to ISAC in \cite{Wang2025uniform} and to secure ISAC with eavesdropping targets in \cite{Su2022secureDFRC,Jia2026,Su2025BCRB,Bozanis2026robust}. These studies rely on fixed-position arrays and focus on target direction or spatial illumination, without investigating how antenna reconfiguration can enable joint direction and distance estimation and potentially satisfy communication and security requirements that are infeasible with a fixed array.

PASS offer a promising way to explore these possibilities by allowing the antenna positions, and hence the propagation, to be adjusted, opening opportunities to manage the trade-offs among communication, sensing, and security. First demonstrated by NTT DOCOMO \cite{Suzuki2022}, the concept uses small dielectric elements, termed pinching antennas (PAs), to couple energy between a waveguide and free space. Early analysis established their potential to reduce path loss and manage multi-user interference \cite{Ding2024flexible}, while \cite{Liu2025tutorial} provided a systematic treatment of PASS modeling and beamforming. Unlike movable and fluid antennas, PASS support position adjustments over several meters, affecting both the phase and the path loss of the channel.

For communication, \cite{Xu2025rate} developed a two-stage placement method that balances path-loss reduction and phase alignment to maximize the downlink rate. The analysis in \cite{Tyrovolas2025} characterized outage probability, average rate, and optimal placement under waveguide attenuation. Building on a physical model of antenna coupling, \cite{Wang2025modeling} jointly optimized transmit beamforming and PA activation to minimize transmit power.

Regarding PASS sensing, \cite{Bozanis2025CRB} derived closed-form Cram\'er--Rao bounds (CRBs) for bistatic joint range and direction estimation, while \cite{Li2025PASSISAC} jointly designed beamforming and PA placement to minimize the sensing CRB under communication constraints. Target uncertainty was considered in \cite{Jiang2025BCRB} through Bayesian bounds and placement strategies based on prior distributions. The Ziv--Zakai analysis in \cite{Jiang2026ZZB} further accounted for localization ambiguities that local bounds may not capture. Moreover, \cite{Chen2026unified} proposed a common framework supporting uplink and downlink communication with active and passive targets. Finally, \cite{Feng2026phase} derived the Cartesian position error bound (PEB) for uplink localization by exploiting both amplitude and phase information.

For secure PASS transmission, \cite{Sun2025PLS} jointly optimized baseband beamforming and PA positions to improve secrecy rates. AN-assisted transmission under imperfect channel state information (CSI) of Eve was investigated in \cite{Papanikolaou2025AN}, while \cite{Zhang2026attenuation} derived secrecy outage and ergodic secrecy capacity bounds accounting for waveguide losses. Bringing security and sensing together, \cite{Illi2025} optimized PA placement, information beamforming, and AN to improve target illumination under secrecy constraints. More recently, \cite{Song2026} combined transmit PAs with leaky coaxial receivers to track Eve's position and velocity, updating PA configurations and baseband processing on different timescales. As discussed above, stronger illumination then also increases the leakage, whereas SLP can support illumination while impairing eavesdropping. For PASS, \cite{Pang2026} jointly optimized SLP and PA positions to minimize transmit power, but without considering sensing or security requirements.

\subsection{Motivation and Contributions}
The above studies leave unexplored how SLP and PA placement can be jointly designed for secure Cartesian localization, and whether their combination can expand the feasible operating region under communication and security constraints. Equally important is understanding which antenna positions best support localization and how the transmit and receive apertures contribute to sensing performance. Positions that strengthen the communication links or increase target illumination do not necessarily provide the spatial information needed for accurate position estimation. This raises the question of how the two apertures should be configured and how in-waveguide attenuation affects this choice.

Motivated by these questions, this paper investigates joint SLP and antenna placement in a downlink PASS-ISAC system, where a base station (BS) serves multiple Bobs and localizes a point target that is also a potential passive Eve. We jointly design the transmit waveforms and the PA positions on both sides, while establishing how illumination and receive geometry influence sensing accuracy. The resulting framework combines CI for the Bobs and DI toward the target/Eve with a sensing objective that directly quantifies Cartesian position accuracy. The main contributions are summarized as follows:
\begin{itemize}
\item We derive the Cartesian PEB of a PASS-ISAC system for target localization with unknown complex reflectivity and establish an exact factorization that separates receive geometry from transmit illumination. Under the dominant phase-information approximation, we further characterize the receive-geometry requirements for local position estimation and identify antenna configurations that make localization impossible.
\item We analytically characterize the range-information limitations of compact linear receive arrays and obtain a closed-form characterization of sensing-optimal PA offsets for a symmetric receive layout. These results explain why proximity to the target alone is insufficient and guide the initialization of more general deployments. We also quantify the impact of in-waveguide attenuation on localization accuracy.
\item We formulate the joint design of SLP and transmit/receive PA positions to minimize the Cartesian PEB averaged over data-symbol vectors. The formulation incorporates CI for the Bobs, DI toward the target/Eve, and individual transmit-waveguide power limits, with PA positions shared across symbol intervals.
\item We exploit the PEB factorization to separate receive PA placement from the joint transmit placement and SLP design. The receive positions are optimized through a cyclic search initialized using the derived geometric insights, while the transmit-side design combines one-dimensional position searches with convex per-symbol precoding subproblems over the wrong-symbol regions, each of which is solved in closed form. We also establish convergence and derive the complexity of the proposed scheme.
\item Numerical results validate the analysis and demonstrate improved localization performance and an expanded feasible operating region under CI/DI requirements compared with fixed-placement PASS and a compact array. They also illustrate the distinct roles of the transmit and receive apertures, the impact of in-waveguide attenuation, and the trade-off between localization accuracy and communication QoS.
\end{itemize}

\subsection{Structure and Notation}
The remainder of this paper is organized as follows. Section~\ref{sec:model} presents the system model. Section~\ref{sec:metric} derives the Cartesian PEB and establishes the geometric properties relevant to PA placement. Section~\ref{sec:design} formulates the joint design problem and develops the proposed solution. Section~\ref{sec:results} presents the numerical results, and Section~\ref{sec:conclusion} concludes the paper.

Notation: Bold lowercase and uppercase letters denote vectors and matrices, respectively, and calligraphic letters denote sets. The operators $(\cdot)^{\top}$, $(\cdot)^H$, and $(\cdot)^*$ denote transpose, Hermitian transpose, and complex conjugation, respectively. Element-wise division is denoted by $\oslash$, and $|\mathbf{a}|^{\circ 2}$ denotes the vector containing the squared magnitudes of the entries of $\mathbf{a}$. The operators $\tr(\cdot)$ and $\rk(\cdot)$ denote trace and rank, respectively. The notation $[\mathbf{A}]_{i,j}$ denotes the $(i,j)$-th entry of $\mathbf{A}$, and $[\mathbf{A}]_{1:2,1:2}$ denotes its leading $2\times2$ block. The real and imaginary parts are denoted by $\Re\{\cdot\}$ and $\Im\{\cdot\}$, respectively. The diagonal matrix $\diag(\mathbf{w})$ has the entries of $\mathbf{w}$ on its diagonal. The all-ones vector and the $N\times N$ identity matrix are denoted by $\mathbf{1}$ and $\mathbf{I}_N$, respectively. The notation $\|\cdot\|$ denotes the Euclidean norm, and $\angle s$ denotes the phase of $s$. Expectation is denoted by $\mathbb{E}[\cdot]$, with a subscript indicating the random variable. Finally, $\mathcal{CN}(\mathbf{0},\mathbf{R})$ denotes the circularly symmetric complex Gaussian distribution with zero mean and covariance matrix $\mathbf{R}$.

\section{System Model}\label{sec:model}
\begin{figure*}[t]\centering
\includegraphics[width=0.9\textwidth, trim={0 45 55 5}, clip]{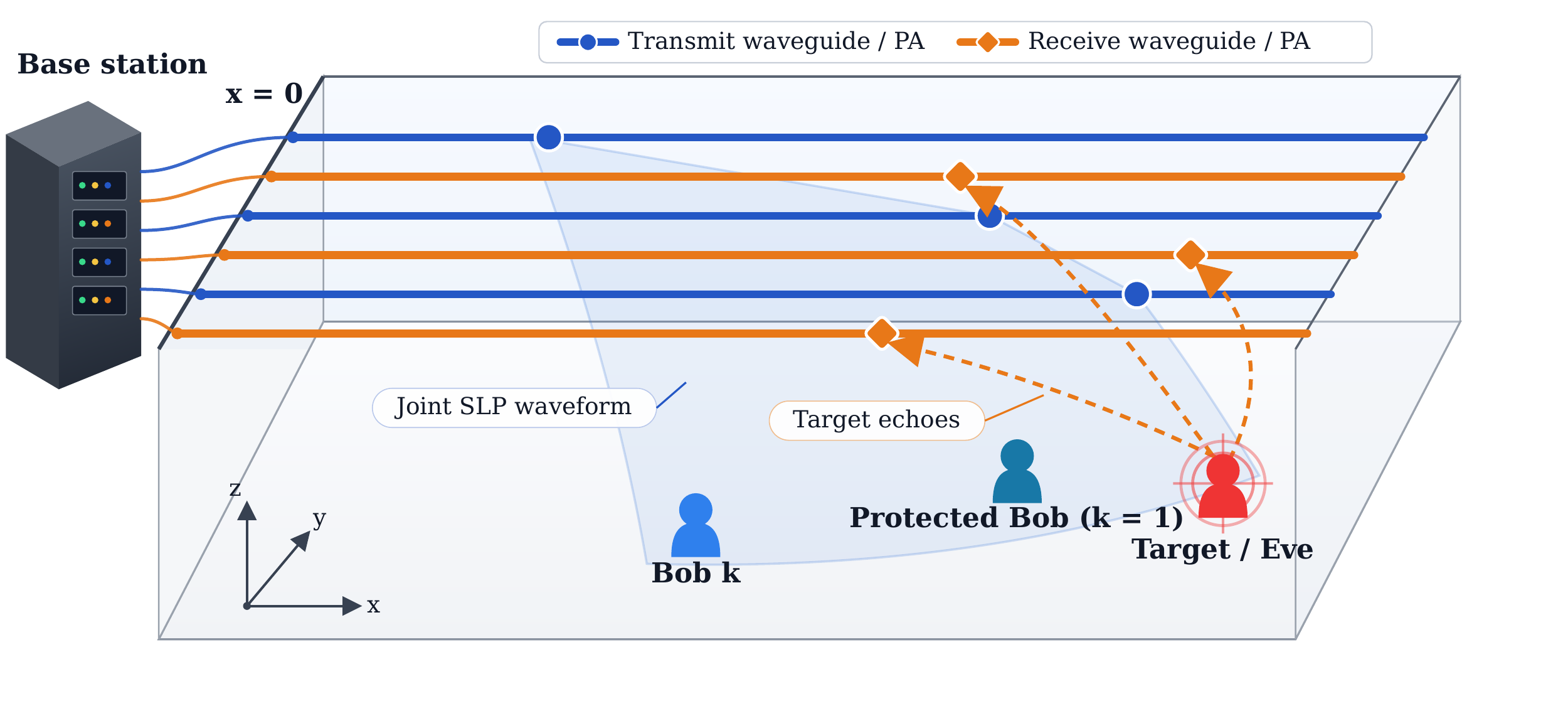}
\caption{The considered secure PASS-ISAC system.}\label{fig:system}
\end{figure*}
\subsection{Pinching-Antenna Model and Network Topology}
We consider the indoor downlink ISAC system illustrated in Fig.~\ref{fig:system}. The BS is connected to $N_{\rm T}$ transmit and $N_{\rm R}$ receive dielectric waveguides, indexed by the sets $\mathcal{N}_{\rm T}=\{1,\dots,N_{\rm T}\}$ and $\mathcal{N}_{\rm R}=\{1,\dots,N_{\rm R}\}$, respectively, all of length $D_x$, parallel to the $x$-axis, mounted at height $d$, and fed at $x=0$. The $n$-th transmit waveguide, $n\in\mathcal{N}_{\rm T}$, lies at $y=\tilde y^{\rm T}_n$ and carries a single PA at $\tilde\bpsi^{\rm T}_n=(\tilde x^{\rm T}_n,\tilde y^{\rm T}_n,d)$, and the $m$-th receive waveguide, $m\in\mathcal{N}_{\rm R}$, lies at $y=\tilde y^{\rm R}_m$ and carries a single PA at $\tilde\bpsi^{\rm R}_m=(\tilde x^{\rm R}_m,\tilde y^{\rm R}_m,d)$. Each transmit waveguide is driven by one RF chain and each receive waveguide feeds one receive chain. The transmit and the receive waveguides are separate, so that the receive chains are free of self-interference, as is standard for co-located ISAC transceivers with separate transmit and receive apertures \cite{Liu2018MUMIMO}. In a PASS, the two sets of waveguides are distinct dielectric structures, fed independently and spaced apart in the ceiling, so that the residual coupling is a fixed leakage that can be calibrated.
The system serves $K$ single-antenna legitimate users (Bobs), indexed by the set $\mathcal{K}=\{1,\dots,K\}$ and located at $\bpsi_k=(x_k,y_k,0)$, $k\in\mathcal{K}$, and a point target, which is also a potential passive eavesdropper (Eve), is located at $\bpsi_t=(x_t,y_t,0)$.

The channel from a PA at $\tilde\bpsi=(\tilde x,\tilde y,d)$ to a point $\bpsi$ on the ground comprises the free-space spherical-wave propagation, the in-waveguide phase shift accumulated from the feed point to the PA, and the in-waveguide attenuation \cite{Xu2025rate,Tyrovolas2025}, and is given by
\begin{equation}\label{eq:pa_channel}
g(\bpsi;\tilde\bpsi)=\frac{\sqrt{\eta}\,e^{-\alpha\tilde x/2}\,e^{-j\left(\kappa\|\bpsi-\tilde\bpsi\|+\kappa_g\tilde x\right)}}{\|\bpsi-\tilde\bpsi\|},
\end{equation}
where $\eta=(\lambda/4\pi)^2$ denotes the free-space channel gain at a reference distance of $1$ m, $\lambda$ is the free-space wavelength, $\kappa=2\pi/\lambda$ is the free-space wavenumber, and $\kappa_g=n_{\rm eff}\kappa$ is the guided wavenumber, with $n_{\rm eff}$ denoting the effective refractive index of the dielectric waveguide. Moreover, $\alpha$ [Np/m] is the power attenuation coefficient of the waveguide, so that the guided power decays as $e^{-\alpha\tilde x}$ along the waveguide \cite{Tyrovolas2025}. The transmit and receive channel vectors to a point $\bpsi=(x,y,0)$ are collected as
\begin{align}
\mathbf{g}(\bpsi)&=\left[g(\bpsi;\tilde\bpsi^{\rm T}_1),\dots,g(\bpsi;\tilde\bpsi^{\rm T}_{N_{\rm T}})\right]^{\top}\in\mathbb{C}^{N_{\rm T}},\label{eq:gvec}\\
\ba(\bpsi)&=\left[g(\bpsi;\tilde\bpsi^{\rm R}_1),\dots,g(\bpsi;\tilde\bpsi^{\rm R}_{N_{\rm R}})\right]^{\top}\in\mathbb{C}^{N_{\rm R}},\label{eq:avec}
\end{align}
and we define the $k$-th Bob, target/Eve, and receive channels as
\begin{equation}\label{eq:channels}
\bh_k=\mathbf{g}(\bpsi_k)^*,\qquad \bh_e=\mathbf{g}(\bpsi_t)^*,\qquad \ba_t=\ba(\bpsi_t).
\end{equation}
The transmit and receive PA positions, collected in
\begin{equation}\label{eq:positions}
\tilde{\bx}^{\rm T}=\left[\tilde x^{\rm T}_1,\dots,\tilde x^{\rm T}_{N_{\rm T}}\right]^{\top},\qquad
\tilde{\bx}^{\rm R}=\left[\tilde x^{\rm R}_1,\dots,\tilde x^{\rm R}_{N_{\rm R}}\right]^{\top},
\end{equation}
are the reconfigurable variables of the PASS and affect the model only through the channel vectors $\mathbf{g}(\cdot)$ and $\ba(\cdot)$, respectively.

\subsection{Communication Signal Model}
Let $\mathbf{s}=[s_1,\dots,s_K]^{\top}$ denote the $M$-PSK symbols intended for the $K$ Bobs, with $M$ the constellation order. The BS generates a symbol-dependent transmit vector $\bx\in\mathbb{C}^{N_{\rm T}}$, whose $n$-th entry feeds the $n$-th transmit waveguide. Since each waveguide has its own power amplifier, the power is limited per waveguide,
\begin{equation}\label{eq:power}
|x_n|^2\le P/N_{\rm T},\qquad \forall n\in\mathcal{N}_{\rm T},
\end{equation}
where $P$ is the total power budget. The received signal at Bob $k$ is given by
\begin{equation}\label{eq:yk}
y_k=\bh_k^H\bx+n_k,\qquad \forall k\in\mathcal{K},
\end{equation}
where $n_k\sim\mathcal{CN}(0,\sigma_k^2)$ is the additive white Gaussian noise (AWGN) at Bob $k$. It is convenient to express the noise-free part of \eqref{eq:yk} in a coordinate system rotated by the phase of $s_k$ \cite{Li2020tutorial}, i.e.,
\begin{equation}\label{eq:ytk}
\tilde y_k=\bh_k^H\bx\,e^{-j\angle s_k}.
\end{equation}

\subsection{Eavesdropping Model}
The signal observed by the target/Eve is
\begin{equation}\label{eq:ye}
y_e=\bh_e^H\bx+n_e,
\end{equation}
where $n_e\sim\mathcal{CN}(0,\sigma_e^2)$ is the AWGN at the target/Eve. The target/Eve attempts to decode the confidential stream of the protected Bob, which we take to be the Bob nearest to the target/Eve \cite{Su2022secureDFRC,Jia2026,Bozanis2026robust}. Without loss of generality, the protected Bob is indexed by $k=1$, and, as in \eqref{eq:ytk}, the rotated noise-free observation of the target/Eve is $\tilde y_e=\bh_e^H\bx\,e^{-j\angle s_1}$.

\subsection{Sensing Signal Model}
The echo of the target is collected by the receive PAs, guided to the feeds of the $N_{\rm R}$ receive waveguides, and delivered to the receive chains. Following the uplink PASS model of \cite{Tegos2025uplink}, the echo received on the $m$-th receive chain is proportional to $g(\bpsi_t;\tilde\bpsi^{\rm R}_m)$, so that the received signal vector is given by
\begin{equation}\label{eq:echo}
\mathbf{y}_{\rm R}=\alpha_t\,\ba_t\,\bh_e^H\bx+\mathbf{n}_{\rm R},\qquad \mathbf{n}_{\rm R}\sim\mathcal{CN}(\mathbf{0},\sigma_{\rm R}^2\mathbf{I}_{N_{\rm R}}),
\end{equation}
where $\mathbf{n}_{\rm R}$ is the AWGN vector of the receive chains and $\alpha_t\in\mathbb{C}$ is the unknown complex reflectivity of the target. We define the illumination of the target as
\begin{equation}\label{eq:illum}
p\triangleq\bh_e^H\bx,
\end{equation}
i.e., the noise-free part of the observation \eqref{eq:ye} of the target/Eve, whose magnitude $|p|=|\tilde y_e|$ is unaffected by the rotation by $\angle s_1$. The target/Eve is modeled as a static point scatterer with radar cross section $\sigma_{\rm RCS}$. Since the free-space losses of the transmit and receive paths are already contained in $\bh_e$ and $\ba_t$, the radar equation gives $|\alpha_t|=\sqrt{4\pi\sigma_{\rm RCS}}/\lambda$ \cite{Liu2022CRB,Bozanis2025CRB,Su2025BCRB}, which is used to evaluate the bound, while the real and imaginary parts of $\alpha_t$ are treated as unknown nuisance parameters in the estimation.
\subsection{CI and DI Regions}
\begin{figure*}[t]\centering
{\setlength{\figwtwo}{0.85\figwtwo}\input{figures/figCIDI.tex}\hspace{0.03\textwidth}\input{figures/figC.tex}}
\caption{(a) CI and DI regions for QPSK and (b) received symbols with a zoom around the origin.}\label{fig:cidi}
\end{figure*}
For $M$-PSK signaling with $M>2$, the decision region of each symbol is a sector of half-angle $\Phi=\pi/M$ centered at the symbol. The CI requirement of Bob $k$ states that $\tilde y_k$ lies inside the decision region of $s_k$ with a safety margin determined by the SNR threshold $\Gamma_k$ \cite{Li2020tutorial}, as illustrated in Fig.~\ref{fig:cidi}(a) for QPSK, i.e.,
\begin{equation}\label{eq:CI}
\left|\Im\{\tilde y_k\}\right|\le\left(\Re\{\tilde y_k\}-\sqrt{\Gamma_k\sigma_k^2}\right)\tan\Phi,\qquad \forall k\in\mathcal{K}.
\end{equation}
By decomposing \eqref{eq:CI} along the two boundary rays of the decision sector, condition \eqref{eq:CI} is equivalent to the pair of linear inequalities
\begin{equation}\label{eq:CIlin}
\Re\left\{\tilde\bh_{k,i}^H\bx\right\}\ge\gamma_k,\qquad i\in\{1,2\},\ \forall k\in\mathcal{K},
\end{equation}
where the rotated channels and the threshold are given by
\ifonecol
\begin{align}
\tilde\bh_{k,1}&=e^{j\angle s_k}\left(\sin\Phi+j\cos\Phi\right)\bh_k,\quad \tilde\bh_{k,2}=e^{j\angle s_k}\left(\sin\Phi-j\cos\Phi\right)\bh_k,\label{eq:hk12}\\
\gamma_k&=\sqrt{\Gamma_k\sigma_k^2}\,\sin\Phi.\label{eq:gammak}
\end{align}
\else
\begin{align}
\tilde\bh_{k,1}&=e^{j\angle s_k}\left(\sin\Phi+j\cos\Phi\right)\bh_k,\nonumber\\
\tilde\bh_{k,2}&=e^{j\angle s_k}\left(\sin\Phi-j\cos\Phi\right)\bh_k,\label{eq:hk12}\\
\gamma_k&=\sqrt{\Gamma_k\sigma_k^2}\,\sin\Phi.\label{eq:gammak}
\end{align}
\fi

For the target/Eve, the DI requirement states that $\tilde y_e$ lies outside the decision region of the protected symbol $s_1$ and inside the destructive region $\mathcal{D}_{\rm DI}$, which is formed by the decision regions of the $M-1$ wrong symbols, each with a detection margin $\Gamma_e$, i.e., the union of the red cones in Fig.~\ref{fig:cidi}(a). The margin plays for the target/Eve the same role that $\Gamma_k$ plays for the Bobs, but with respect to a wrong symbol, so that the target/Eve detects $s'$ with an SNR of at least $\Gamma_e$, i.e., it decides a wrong symbol with high probability. Let $\mathcal{D}(s')$ denote the set of transmit vectors for which the observation $\bh_e^H\bx$ lies inside the decision region of a wrong symbol $s'\ne s_1$ with margin $\sqrt{\Gamma_e\sigma_e^2}$, i.e., a convex cone shifted away from the origin by the margin
\begin{equation}\label{eq:DI}
\mathcal{D}(s')=\left\{\bx:\ \Re\{\tilde\ba_{i}^H(s')\bx\}\ge\gamma_e,\ i=1,2\right\},
\end{equation}
where
\ifonecol
\begin{align}
\tilde\ba_{1}(s')&=e^{j\angle s'}\left(\sin\Phi+j\cos\Phi\right)\bh_e,\quad \tilde\ba_{2}(s')=e^{j\angle s'}\left(\sin\Phi-j\cos\Phi\right)\bh_e,\label{eq:ae12}\\
\gamma_e&=\sqrt{\Gamma_e\sigma_e^2}\,\sin\Phi.\label{eq:gammae}
\end{align}
\else
\begin{align}
\tilde\ba_{1}(s')&=e^{j\angle s'}\left(\sin\Phi+j\cos\Phi\right)\bh_e,\nonumber\\
\tilde\ba_{2}(s')&=e^{j\angle s'}\left(\sin\Phi-j\cos\Phi\right)\bh_e,\label{eq:ae12}\\
\gamma_e&=\sqrt{\Gamma_e\sigma_e^2}\,\sin\Phi.\label{eq:gammae}
\end{align}
\fi
Then, the DI region is the union
\begin{equation}\label{eq:DIunion}
\mathcal{D}_{\rm DI}=\bigcup_{s'\ne s_1}\mathcal{D}(s').
\end{equation}
For QPSK, the wrong symbols are $s_1e^{\pm j\pi/2}$ and $-s_1$, and each apex in Fig.~\ref{fig:cidi}(a) lies at distance $\sqrt{\Gamma_e\sigma_e^2}$ from the origin along its wrong symbol.

\section{Sensing Metric and Geometric Insights}\label{sec:metric}
\subsection{Cartesian Position Error Bound}
The sensing parameter of interest is the position $\bpsi_t$ of the target on the ground, and the sensing performance is quantified through the PEB, i.e., the CRB, which lower-bounds the root-mean-square error of any unbiased estimate of $\bpsi_t$. The following lemma gives the single-snapshot PEB of the considered secure PASS-ISAC system in closed form.

\begin{lemma}\label{th:factor}
Let $\boldsymbol{\xi}=[x_t,y_t,\Re\{\alpha_t\},\Im\{\alpha_t\}]^{\top}$ collect the unknowns of \eqref{eq:echo}, with the complex reflectivity treated as a nuisance parameter. For a single symbol interval, the equivalent Fisher information matrix (FIM) of the target position is
\begin{equation}\label{eq:Je}
\mathbf{J}_{\rm e}=\frac{2|\alpha_t|^2}{\sigma_{\rm R}^2}\left|\bh_e^H\bx\right|^2\mathbf{G}(\tilde{\bx}^{\rm R}),
\end{equation}
where $\mathbf{G}(\tilde{\bx}^{\rm R})$ is the geometric term, which depends only on the receive PA positions and is given by
\begin{equation}\label{eq:G}
\mathbf{G}(\tilde{\bx}^{\rm R})=\Re\left\{\mathbf{D}^H\mathbf{W}\mathbf{D}\right\},
\end{equation}
with
\begin{equation}\label{eq:W}
\mathbf{W}=\diag(\mathbf{w})-\frac{\mathbf{w}\mathbf{w}^{\top}}{\mathbf{1}^{\top}\mathbf{w}},\qquad \mathbf{w}=|\ba_t|^{\circ 2},
\end{equation}
being the centered matrix of the receive power weights and
\begin{equation}\label{eq:D}
\mathbf{D}=\left[\frac{\partial\ba_t}{\partial x_t}\oslash\ba_t,\ \frac{\partial\ba_t}{\partial y_t}\oslash\ba_t\right]\in\mathbb{C}^{N_{\rm R}\times2}
\end{equation}
the matrix of logarithmic derivatives of the receive channel with respect to the target position, given in closed form in Appendix~\ref{app:gdop}. The PEB of the target position can therefore be expressed as
\begin{equation}\label{eq:PEBfactor}
\PEB=\sqrt{\tr\left(\mathbf{J}_{\rm e}^{-1}\right)}=\frac{\overbrace{c(\tilde{\bx}^{\rm R})}^{\text{geometry}}}{\underbrace{\left|\bh_e^H\bx\right|}_{\text{illumination}}},\quad c(\tilde{\bx}^{\rm R})=\sqrt{\frac{\sigma_{\rm R}^2\,\tr\left(\mathbf{G}^{-1}\right)}{2|\alpha_t|^2}}.
\end{equation}
\end{lemma}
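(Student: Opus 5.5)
The plan is to apply the Slepian--Bamberger formula for a complex Gaussian observation with parameter-dependent mean. We then eliminate the nuisance reflectivity through a Schur complement.

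For a single symbol interval, \eqref{eq:echo} gives $\mathbf{y}_{\rm R}\sim\mathcal{CN}(\boldsymbol{\mu},\sigma_{\rm R}^2\mathbf{I}_{N_{\rm R}})$ with mean $\boldsymbol{\mu}(\boldsymbol{\xi})=\alpha_t p\,\ba_t$, where $p=\bh_e^H\bx$. Since the transmit channel $\bh_e$ also depends on $\bpsi_t$, we will state the modelling convention we use: the illumination $p$ is a known transmitted quantity, so only $\ba_t$ is differentiated. This matches the lemma's claim that $\mathbf{G}$ depends only on $\tilde{\bx}^{\rm R}$. The FIM is then
\begin{equation*}
[\mathbf{J}]_{i,j}=\frac{2}{\sigma_{\rm R}^2}\Re\left\{\frac{\partial\boldsymbol{\mu}^H}{\partial\xi_i}\frac{\partial\boldsymbol{\mu}}{\partial\xi_j}\right\}.
\end{equation*}
The derivatives are $\partial\boldsymbol{\mu}/\partial x_t=\alpha_t p\,\partial\ba_t/\partial x_t$, $\partial\boldsymbol{\mu}/\partial y_t=\alpha_t p\,\partial\ba_t/\partial y_t$, $\partial\boldsymbol{\mu}/\partial\Re\{\alpha_t\}=p\,\ba_t$, and $\partial\boldsymbol{\mu}/\partial\Im\{\alpha_t\}=jp\,\ba_t$.

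Next we rewrite these derivatives in logarithmic form. Because every entry of $\ba_t$ is nonzero, $\partial\ba_t/\partial x_t=\diag(\ba_t)\mathbf{d}_x$, where $\mathbf{d}_x$ is the first column of $\mathbf{D}$ in \eqref{eq:D}, and likewise for $y_t$. Since $\diag(\ba_t)^H\diag(\ba_t)=\diag(\mathbf{w})$, the FIM blocks take the following form:
\begin{itemize}
\item The position block is $\frac{2|\alpha_t p|^2}{\sigma_{\rm R}^2}\Re\{\mathbf{D}^H\diag(\mathbf{w})\mathbf{D}\}$.
\item The nuisance block is $\frac{2|p|^2}{\sigma_{\rm R}^2}(\mathbf{1}^\top\mathbf{w})\mathbf{I}_2$.
\item The cross block has rows $\frac{2|p|^2}{\sigma_{\rm R}^2}\Re\{\alpha_t\mathbf{w}^\top\mathbf{d}_\ell\}$ and $\frac{2|p|^2}{\sigma_{\rm R}^2}\Re\{j\alpha_t\mathbf{w}^\top\mathbf{d}_\ell\}$ for $\ell\in\{x,y\}$, which equal $\Re\{\alpha_t z_\ell\}$ and $-\Im\{\alpha_t z_\ell\}$ with $z_\ell=\mathbf{w}^\top\mathbf{d}_\ell$, up to the factor $\frac{2|p|^2}{\sigma_{\rm R}^2}$.
\end{itemize}

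The Schur complement $\mathbf{J}_{\rm e}=\mathbf{J}_{\psi\psi}-\mathbf{J}_{\psi\alpha}\mathbf{J}_{\alpha\alpha}^{-1}\mathbf{J}_{\alpha\psi}$ then subtracts, entrywise, the term $\frac{2|p|^2}{\sigma_{\rm R}^2(\mathbf{1}^\top\mathbf{w})}$ times
\begin{equation*}
\Re\{\alpha_t z_\ell\}\Re\{\alpha_t z_m\}+\Im\{\alpha_t z_\ell\}\Im\{\alpha_t z_m\}=\Re\{(\alpha_t z_\ell)^*\alpha_t z_m\}=|\alpha_t|^2\Re\{z_\ell^*z_m\}.
\end{equation*}
Because $\mathbf{w}$ is real, $z_\ell^*z_m=\mathbf{d}_\ell^H\mathbf{w}\mathbf{w}^\top\mathbf{d}_m$. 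Combining this with the position block yields exactly $\frac{2|\alpha_t|^2|p|^2}{\sigma_{\rm R}^2}\Re\{\mathbf{D}^H\mathbf{W}\mathbf{D}\}$, which is \eqref{eq:Je} with $\mathbf{W}$ as in \eqref{eq:W}. This bookkeeping step is the main obstacle. The key point is that the two real nuisance parameters together act like a single complex amplitude, so the phase of $\alpha_t$ drops out and only $|\alpha_t|^2$ survives.

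The factorization \eqref{eq:PEBfactor} then follows directly. Assuming $\mathbf{G}$ is invertible (the identifiability condition discussed later), we have $\mathbf{J}_{\rm e}^{-1}=\frac{\sigma_{\rm R}^2}{2|\alpha_t|^2|p|^2}\mathbf{G}^{-1}$, so taking the square root of the trace gives $c(\tilde{\bx}^{\rm R})/|\bh_e^H\bx|$. Finally, $\mathbf{W}$ is the weighted centering matrix: it is positive semidefinite with $\mathbf{W}\mathbf{1}=\mathbf{0}$, which makes $\mathbf{G}\succeq\mathbf{0}$. It depends only on $\ba_t$, and hence only on $\tilde{\bx}^{\rm R}$, and in particular it is independent of $\bx$ and $\tilde{\bx}^{\rm T}$.
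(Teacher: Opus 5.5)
Your FIM entries and the Schur complement (up to a sign slip noted below) follow the same route as the paper. The paper eliminates $\alpha_t$ through the projector $\boldsymbol\Pi^\perp_{\mathbf{v}}$ and the identity $\diag(\ba_t)^H\boldsymbol\Pi^\perp_{\ba_t}\diag(\ba_t)=\mathbf{W}$ rather than entrywise, but that is only a difference of presentation. There is, however, a genuine gap at your first step. In \eqref{eq:echo}, $\bh_e=\mathbf{g}(\bpsi_t)^*$ depends on the unknown target position. So $p=\bh_e^H\bx$ is not a known quantity, and the true derivatives are $\partial\boldsymbol\mu/\partial x_t=\alpha_t p\,\partial\ba_t/\partial x_t+\alpha_t(\partial p/\partial x_t)\,\ba_t$, and likewise for $y_t$.

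You drop the second term by a ``modelling convention'' and justify it with the lemma's own conclusion that $\mathbf{G}$ depends only on $\tilde{\bx}^{\rm R}$. That reasoning is circular, and as written you have proved the lemma for a different model. That the $\bpsi_t$-dependence of the illumination carries no position information is exactly the nontrivial content here: it is the paper's \eqref{eq:PiV}, on which Remark~\ref{rem:txaperture} rests. Once the extra term is included, both your position block and your cross blocks change, so your entrywise bookkeeping no longer applies as is.

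The repair is short. Since $p\neq0$ (needed for \eqref{eq:PEBfactor} anyway), set $c_x=\alpha_t(\partial p/\partial x_t)/p$ and $c_y=\alpha_t(\partial p/\partial y_t)/p$. The extra term in the $\ell$-th position derivative is then $\Re\{c_\ell\}\,\partial\boldsymbol\mu/\partial\Re\{\alpha_t\}+\Im\{c_\ell\}\,\partial\boldsymbol\mu/\partial\Im\{\alpha_t\}$. The true Jacobian is therefore yours multiplied on the right by the real matrix $\mathbf{T}=\left[\begin{smallmatrix}\mathbf{I}_2&\mathbf{0}\\ \mathbf{C}&\mathbf{I}_2\end{smallmatrix}\right]$, where $\mathbf{C}$ has columns $[\Re\{c_\ell\},\Im\{c_\ell\}]^{\top}$, and the true FIM is $\mathbf{T}^{\top}\mathbf{J}\mathbf{T}$. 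Because the first block row of $\mathbf{T}^{-1}$ is $[\mathbf{I}_2,\ \mathbf{0}]$, the leading $2\times2$ block of $(\mathbf{T}^{\top}\mathbf{J}\mathbf{T})^{-1}$, i.e., $\mathbf{J}_{\rm e}^{-1}$, is unchanged. Equivalently, the extra term lies in the complex span of $\ba_t$ and is annihilated by $\boldsymbol\Pi^\perp_{\ba_t}$.

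A minor, harmless slip: the $\Im\{\alpha_t\}$ cross entry is $+\Im\{\alpha_t z_\ell\}$ (times $2|p|^2/\sigma_{\rm R}^2$), not $-\Im\{\alpha_t z_\ell\}$. It enters the Schur complement only through products $\Im\{\alpha_t z_\ell\}\Im\{\alpha_t z_m\}$, so your final expression is unaffected.
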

\begin{IEEEproof}
The proof is presented in Appendix~\ref{app:factor}.
\end{IEEEproof}

Lemma~\ref{th:factor} leads to an important observation on how the three requirements of the system act on the transmit vector. For given receive PA positions, the sensing performance depends on $\bx$ only through the magnitude $|p|$ of the target observation $p=\bh_e^H\bx$, and not on its phase. The DI constraint \eqref{eq:DIunion} confines the phase of $p$ to a wrong-symbol sector and imposes a minimum decision margin, since it requires $\tilde y_e=p\,e^{-j\angle s_1}$ to lie inside one of the regions $\mathcal{D}(s')$ of Fig.~\ref{fig:cidi}(a), but does not upper-bound $|p|$, whereas the CI constraints \eqref{eq:CI} involve the observations $\bh_k^H\bx$ of the Bobs and not $p$. Moreover, scaling a feasible observation outward preserves feasibility and drives it deeper into the wrong-symbol region, so that the probability of an erroneous decision at the target/Eve approaches one as the illumination increases. This is not the case under the DI condition of \cite{Su2022secureDFRC,Jia2026,Su2025BCRB,Bozanis2026robust}, in which $\Gamma_e$ upper-bounds the SNR of the correct symbol, so that observations inside the decision region of $s_1$ but below the margin are admissible, and scaling such an observation moves it into the CI region of $s_1$. Hence, unlike AN-based designs, in which the target illumination is the secrecy leak, CI/DI precoding with \eqref{eq:DIunion} allows the target to be illuminated at full power without granting it access to the confidential symbols.

\begin{remark}\label{rem:txaperture}
By Lemma~\ref{th:factor}, the structure of $\mathbf{J}_{\rm e}$ is determined only by the receive PA positions, since any variation of the illumination $p$ with $\bpsi_t$, in phase or magnitude, is indistinguishable from a variation of the unknown $\alpha_t$ in \eqref{eq:echo} and carries no position information, as shown in \eqref{eq:PiV}. Hence, no transmit array can compensate for an uninformative receive geometry. However, the transmit side remains essential, as it sets the scale of the PEB through $|p|$ and is the only side that affects the CI and DI constraints, i.e., it governs the sensing SNR and feasibility.
\end{remark}

\subsection{Geometric Insights}\label{sec:geom}
Lemma~\ref{th:factor} confines the effect of the receive PA positions to the geometric term $\mathbf{G}$, but \eqref{eq:G} does not reveal how these positions should be chosen. Having established the factorization of $\PEB$, we now turn our attention to the structure of $\mathbf{G}$ and express it in terms of the geometry between the target and the receive PAs, from which the conditions for the identifiability of the target position and the placement rules for the receive PAs arise.

\begin{lemma}\label{lem:gdop}
Let $\mathbf{u}_m=\left[\tilde x^{\rm R}_m-x_t,\ \tilde y^{\rm R}_m-y_t\right]^{\top}/r_m$, with $r_m=\|\bpsi_t-\tilde\bpsi^{\rm R}_m\|$, collect the horizontal components of the unit vector from the target to the $m$-th receive PA, referred to as the bearing vector. Then, up to terms of relative order $(\kappa r_m)^{-2}$,
\begin{equation}\label{eq:gdop}
\mathbf{G}(\tilde{\bx}^{\rm R})=\kappa^2\sum_{m=1}^{N_{\rm R}}w_m\left(\mathbf{u}_m-\bar{\mathbf{u}}\right)\left(\mathbf{u}_m-\bar{\mathbf{u}}\right)^{\top},
\end{equation}
where $w_m$ is the $m$-th entry of $\mathbf{w}$ in \eqref{eq:W} and
\begin{equation}\label{eq:ubar}
\bar{\mathbf{u}}=\frac{\sum_{m=1}^{N_{\rm R}} w_m\mathbf{u}_m}{\sum_{m=1}^{N_{\rm R}} w_m},
\end{equation}
is the power-weighted mean of the bearing vectors, i.e., the geometric term is $\kappa^2$ times the power-weighted covariance of the bearing vectors.
\end{lemma}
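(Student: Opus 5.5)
The plan is to compute the logarithmic derivatives in $\mathbf{D}$ explicitly, keep only the dominant phase term, and then rewrite $\Re\{\mathbf{D}^H\mathbf{W}\mathbf{D}\}$ as a weighted covariance.

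\emph{Step 1: the log-derivatives.} From \eqref{eq:pa_channel}, for the $m$-th receive PA we have
\[
\ln g(\bpsi_t;\tilde\bpsi^{\rm R}_m)=\text{const}-\ln r_m-j\kappa r_m ,
\]
since the waveguide terms $e^{-\alpha\tilde x/2}e^{-j\kappa_g\tilde x}$ do not depend on $\bpsi_t$. The receive PA sits at height $d$ and the target at height $0$, so
\[
\frac{\partial r_m}{\partial x_t}=-\frac{\tilde x^{\rm R}_m-x_t}{r_m}=-[\mathbf{u}_m]_1,
\]
and similarly for $y_t$. The $m$-th row of $\mathbf{D}$ is therefore
\[
\mathbf{d}_m^{\top}=\left(j\kappa+\frac{1}{r_m}\right)\mathbf{u}_m^{\top}.
\]
These rows are exactly the closed forms announced for Appendix~\ref{app:gdop}.

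\emph{Step 2: separating phase and amplitude information.} Write $\mathbf{D}=\boldsymbol{\Lambda}\mathbf{U}$, where $\mathbf{U}=[\mathbf{u}_1,\dots,\mathbf{u}_{N_{\rm R}}]^{\top}$ is real and $\boldsymbol{\Lambda}=\diag(j\kappa+1/r_m)$. Since $\mathbf{W}$ is real and symmetric,
\[
[\mathbf{D}^H\mathbf{W}\mathbf{D}]_{ab}=\sum_{m,n}W_{mn}\lambda_m^*\lambda_n U_{ma}U_{nb},
\]
with real part
\[
\Re\{\lambda_m^*\lambda_n\}=\kappa^2+\frac{1}{r_mr_n}.
\]
This gives
\[
\mathbf{G}=\kappa^2\mathbf{U}^{\top}\mathbf{W}\mathbf{U}+\mathbf{U}^{\top}\mathbf{R}^{-1}\mathbf{W}\mathbf{R}^{-1}\mathbf{U},\qquad \mathbf{R}=\diag(r_m).
\]
The cross terms $\pm j\kappa/r$ cancel exactly in the real part. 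Entrywise, the second (amplitude) term is smaller than the first by the factor $1/(\kappa^2 r_mr_n)$, i.e., of relative order $(\kappa r_m)^{-2}$. Dropping it is precisely the phase-dominant approximation of the statement.

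\emph{Step 3: the covariance identity.} With $W_{\rm s}=\mathbf{1}^{\top}\mathbf{w}$,
\[
\mathbf{U}^{\top}\mathbf{W}\mathbf{U}=\sum_m w_m\mathbf{u}_m\mathbf{u}_m^{\top}-\frac{1}{W_{\rm s}}\left(\sum_m w_m\mathbf{u}_m\right)\left(\sum_m w_m\mathbf{u}_m\right)^{\top}=\sum_m w_m\mathbf{u}_m\mathbf{u}_m^{\top}-W_{\rm s}\,\bar{\mathbf{u}}\bar{\mathbf{u}}^{\top}.
\]
Expanding $\sum_m w_m(\mathbf{u}_m-\bar{\mathbf{u}})(\mathbf{u}_m-\bar{\mathbf{u}})^{\top}$ and using $\sum_m w_m\mathbf{u}_m=W_{\rm s}\bar{\mathbf{u}}$ yields the same expression, which establishes \eqref{eq:gdop} with $\bar{\mathbf{u}}$ as in \eqref{eq:ubar}.

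\emph{Main obstacle.} The step needing the most care is making "relative order $(\kappa r_m)^{-2}$" precise. Because $\mathbf{W}$ has negative off-diagonal entries, a termwise entrywise bound does not immediately give a matrix-level statement. I would handle this by noting that $\mathbf{W}$ is positive semidefinite, as a weighted centering matrix, so both retained and dropped terms are PSD Gram-type forms. Then
\[
\mathbf{U}^{\top}\mathbf{R}^{-1}\mathbf{W}\mathbf{R}^{-1}\mathbf{U}\preceq\frac{1}{\kappa^2 r_{\min}^2}\,\kappa^2\mathbf{U}^{\top}\mathbf{W}'\mathbf{U}
\]
for a comparable weighting $\mathbf{W}'$. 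If this comparison proves awkward, I would simply state the approximation at the entry level, consistent with the phrasing of the lemma.
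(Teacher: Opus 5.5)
Your proposal is correct and follows the paper's proof in Appendix~\ref{app:gdop} essentially line by line. It uses the same rows $(j\kappa+1/r_m)\mathbf{u}_m^{\top}$ of $\mathbf{D}$, the same split of $\Re\{\mathbf{D}^H\mathbf{W}\mathbf{D}\}$ into $\kappa^2\mathbf{U}^{\top}\mathbf{W}\mathbf{U}$ plus the amplitude term $\mathbf{E}^{\top}\mathbf{W}\mathbf{E}$ with $\mathbf{E}=\diag(r_1^{-1},\dots,r_{N_{\rm R}}^{-1})\mathbf{U}$ (the cross terms being purely imaginary), and the same weighted-centering identity for $\mathbf{U}^{\top}\mathbf{W}\mathbf{U}$.

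One caution on your optional Loewner-order refinement. The dropped term is bounded by $r_{\min}^{-2}\mathbf{U}^{\top}\diag(\mathbf{w})\mathbf{U}$, but not, in general, by any multiple of the retained centered term. The reason is that $\ker\mathbf{W}=\mathrm{span}\{\mathbf{1}\}$, while for collinear bearings the direction $\mathbf{v}$ with $\mathbf{U}\mathbf{v}\propto\mathbf{1}$ is mapped by $\mathbf{E}$ to a multiple of $[r_1^{-1},\dots,r_{N_{\rm R}}^{-1}]^{\top}$, which is not in that kernel when the distances differ. This is exactly why the paper remarks that the amplitude term keeps the $N_{\rm R}=2$ case identifiable. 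So the entry-level reading you fall back on, which is also the paper's, is the appropriate one.
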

\begin{IEEEproof}
The proof is presented in Appendix~\ref{app:gdop}.
\end{IEEEproof}

Lemma~\ref{lem:gdop} has a significant interpretation. A displacement $\delta\bpsi$ of the target changes the distance to the $m$-th receive PA by $-\mathbf{u}_m^{\top}\delta\bpsi$, and hence the phase of its echo by $\kappa\,\mathbf{u}_m^{\top}\delta\bpsi$, which is the quantity the receiver measures. However, a phase shift that is common to all receive PAs is indistinguishable from the unknown phase of $\alpha_t$, so that only the differences between the phase shifts of the receive PAs carry position information. This is the role of the centering by $\bar{\mathbf{u}}$ in \eqref{eq:gdop}. Consequently, $\mathbf{G}$ grows with the spread of the bearing vectors, each weighted by the power received by its PA, and it is singular when all of them are parallel. In this sense, Lemma~\ref{lem:gdop} is the near-field counterpart of the geometric dilution of precision (GDOP) of multilateration, whose anchors are fixed, whereas the weights $w_m$ and the bearings $\mathbf{u}_m$ of a PASS are design variables.


\begin{proposition}\label{prop:ident}
Let $\mathbf{u}_1,\dots,\mathbf{u}_{N_{\rm R}}$ be the bearing vectors of Lemma~\ref{lem:gdop}, and let $\mathbf{G}$ be given by \eqref{eq:gdop}, i.e., up to terms of relative order $(\kappa r_m)^{-2}$. Then, $\rk(\mathbf{G})\le N_{\rm R}-1$, and $\mathbf{G}$ is nonsingular if and only if the bearing vectors do not lie on a common line. Hence, estimating the two-dimensional target position requires $N_{\rm R}\ge3$ receive waveguides with non-collinear bearing vectors. In particular, if all receive PAs are placed at $\tilde x^{\rm R}_m=x_t$, then $x_t$ is unobservable.
\end{proposition}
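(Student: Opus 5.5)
The plan is to work directly from the weighted-covariance form \eqref{eq:gdop} of Lemma~\ref{lem:gdop}. Throughout, the weights satisfy $w_m=|[\ba_t]_m|^2>0$, since every entry of \eqref{eq:pa_channel} is nonzero.

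\textbf{Step 1 (rank bound).} I first write $\mathbf{G}=\kappa^2\mathbf{U}\mathbf{U}^{\top}$, where $\mathbf{U}=[\sqrt{w_1}(\mathbf{u}_1-\bar{\mathbf{u}}),\dots,\sqrt{w_{N_{\rm R}}}(\mathbf{u}_{N_{\rm R}}-\bar{\mathbf{u}})]\in\mathbb{R}^{2\times N_{\rm R}}$. By \eqref{eq:ubar}, the centered vectors satisfy $\sum_m w_m(\mathbf{u}_m-\bar{\mathbf{u}})=\mathbf{0}$. Hence $\mathbf{U}$ has the nonzero null vector $[\sqrt{w_1},\dots,\sqrt{w_{N_{\rm R}}}]^{\top}$, which gives $\rk(\mathbf{G})=\rk(\mathbf{U})\le N_{\rm R}-1$. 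Since $\mathbf{G}$ is $2\times2$, nonsingularity then forces $N_{\rm R}-1\ge2$, i.e., $N_{\rm R}\ge3$.

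\textbf{Step 2 (nonsingularity characterization).} Because $\mathbf{G}$ is positive semidefinite, it is singular if and only if some $\mathbf{v}\ne\mathbf{0}$ satisfies
\begin{equation*}
\mathbf{v}^{\top}\mathbf{G}\mathbf{v}=\kappa^2\sum_m w_m\big(\mathbf{v}^{\top}(\mathbf{u}_m-\bar{\mathbf{u}})\big)^2=0 .
\end{equation*}
With positive weights, this holds if and only if $\mathbf{v}^{\top}\mathbf{u}_m=\mathbf{v}^{\top}\bar{\mathbf{u}}$ for all $m$.

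\emph{Forward direction.} If such a $\mathbf{v}$ exists, all $\mathbf{u}_m$ lie on the affine line $\{\mathbf{z}:\mathbf{v}^{\top}\mathbf{z}=\mathbf{v}^{\top}\bar{\mathbf{u}}\}$.

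\emph{Converse.} Suppose all $\mathbf{u}_m$ lie on a common line $\{\mathbf{z}:\mathbf{v}^{\top}\mathbf{z}=c\}$. Then $\bar{\mathbf{u}}$ is a convex combination of the $\mathbf{u}_m$ and lies on the same line, so $\mathbf{v}^{\top}(\mathbf{u}_m-\bar{\mathbf{u}})=0$ for all $m$ and $\mathbf{G}$ is singular.

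The step needing the most care is the interpretation of ``lie on a common line.'' It must be read as \emph{affine} collinearity of the points $\mathbf{u}_m$ in the plane, not merely parallelism to a line through the origin. The centering by $\bar{\mathbf{u}}$ is exactly what makes affine collinearity the right notion. Consistently, for $N_{\rm R}=2$ any two points are collinear, which agrees with Step 1. In the write-up, I will point out that the phrase ``non-collinear bearing vectors'' in the statement means the endpoints $\mathbf{u}_m$ are not affinely collinear.

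\textbf{Step 3 (special case $\tilde x^{\rm R}_m=x_t$).} If every receive PA sits at $\tilde x^{\rm R}_m=x_t$, then the first component of each $\mathbf{u}_m$ is zero. All $\mathbf{u}_m$ therefore lie on the line $\{\mathbf{z}:z_1=0\}$, and $\bar{\mathbf{u}}$ lies there as well. Taking $\mathbf{v}=[1,0]^{\top}$ gives $\mathbf{v}^{\top}\mathbf{G}\mathbf{v}=0$, so $[\mathbf{G}]_{1,1}=0$ and the whole first row and column of $\mathbf{G}$ vanish. Consequently the Fisher information on $x_t$ is zero in the dominant-order model, and $x_t$ is unobservable: $\tr(\mathbf{G}^{-1})$ diverges, so the PEB in \eqref{eq:PEBfactor} is unbounded.

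I will note that all of these conclusions hold up to the neglected $(\kappa r_m)^{-2}$ terms, as stated in the proposition.
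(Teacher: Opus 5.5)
Your proposal is correct and follows the paper's proof essentially step by step. The centering identity $\sum_m w_m(\mathbf{u}_m-\bar{\mathbf{u}})=\mathbf{0}$ gives the rank bound, full rank is equivalent to the points $\mathbf{u}_m$ not lying on a common (affine) line, and the case $\tilde x^{\rm R}_m=x_t$ zeroes the first entry of every centered vector; your explicit null vector $[\sqrt{w_1},\dots,\sqrt{w_{N_{\rm R}}}]^{\top}$, the quadratic-form test, and the convex-combination step for the converse simply spell out details that the paper states in one line.
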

\begin{IEEEproof}
By \eqref{eq:gdop}, $\mathbf{G}$ is a weighted sum of the rank-one matrices $(\mathbf{u}_m-\bar{\mathbf{u}})(\mathbf{u}_m-\bar{\mathbf{u}})^{\top}$, whose centered vectors satisfy $\sum_{m}w_m(\mathbf{u}_m-\bar{\mathbf{u}})=\mathbf{0}$ by the definition \eqref{eq:ubar} of $\bar{\mathbf{u}}$. At most $N_{\rm R}-1$ of these vectors are therefore linearly independent, so that $\rk(\mathbf{G})\le N_{\rm R}-1$, and $\mathbf{G}$ has full rank two if and only if the centered vectors span the plane, i.e., if and only if the points $\mathbf{u}_m$ do not lie on a common line. Since two points always lie on a line, $N_{\rm R}\ge3$ is necessary. Finally, if $\tilde x^{\rm R}_m=x_t$ for all $m$, then $\mathbf{u}_m=[0,\ (\tilde y^{\rm R}_m-y_t)/r_m]^{\top}$ for all $m$, so that the first entry of every centered vector $\mathbf{u}_m-\bar{\mathbf{u}}$ is zero and, by \eqref{eq:gdop}, $[\mathbf{G}]_{x_tx_t}=0$, i.e., $x_t$ is unobservable.
\end{IEEEproof}

With $N_{\rm R}=2$, the amplitude term of \eqref{eq:ReDWD}, neglected in \eqref{eq:gdop}, keeps the position identifiable when the two PAs are at different distances from the target, but with a PEB larger by a factor of order $\kappa r_m$, so that the position is practically unobservable, as for the compact arrays considered next.

\begin{proposition}\label{prop:compact}
Consider a conventional receive array of $N$ elements placed along a line at positions $q_n$, $n=1,\dots,N$, with aperture $A$, and let the target lie at broadside, at a distance $R\gg A$. Let the range direction be the line of sight to the target and the cross-range direction be parallel to the array, and let $[\mathbf{G}]_{\rm range}$ and $[\mathbf{G}]_{\rm cross}$ be the entries of $\mathbf{G}$ associated with the range and cross-range displacements of the target. Then, up to terms of relative order $(\kappa R)^{-2}$,
\begin{equation}\label{eq:compact}
\frac{[\mathbf{G}]_{\rm range}}{[\mathbf{G}]_{\rm cross}}=\frac{\mathrm{Var}\{q_n^2\}}{4R^2\,\mathrm{Var}\{q_n\}}\;\xrightarrow{N\to\infty}\;\frac{A^2}{60R^2},
\end{equation}
where $\mathrm{Var}\{\cdot\}$ denotes the empirical variance over the $N$ elements and the limit corresponds to a uniform aperture. Hence, the information on the range of the target is smaller than the information on its cross-range position by a factor of order $A^2/R^2$, so that the PEBs of the range and the cross-range coordinates of a uniform aperture satisfy
\begin{equation}\label{eq:PEBratio}
\PEB_{\rm range}\approx\sqrt{60}\,\frac{R}{A}\,\PEB_{\rm cross},
\end{equation}
independently of the SNR and of the number of snapshots.
\end{proposition}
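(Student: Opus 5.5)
I will derive the result directly from Lemma~\ref{lem:gdop}, which gives $\mathbf{G}=\kappa^2\sum_n w_n(\mathbf{u}_n-\bar{\mathbf{u}})(\mathbf{u}_n-\bar{\mathbf{u}})^{\top}$ up to terms of relative order $(\kappa R)^{-2}$.

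\emph{Coordinates.} Let the cross-range axis run along the array and the range axis along the line of sight, with the target at $(0,R)$ and element $n$ at $(q_n,0)$. The element distances are $r_n=\sqrt{R^2+q_n^2}=R\,(1+O(A^2/R^2))$. Hence the weights $w_n\propto r_n^{-2}$ are equal up to relative corrections of order $A^2/R^2$. I will replace them by uniform weights, so that the weighted covariance in Lemma~\ref{lem:gdop} becomes the empirical variance over the elements.

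\emph{Bearing expansion.} The unit vector from the target to element $n$ is $\mathbf{u}_n=[q_n,-R]^{\top}/r_n$. To leading order its cross-range component is $q_n/R$. Its range component is $-R/r_n=-1+q_n^2/(2R^2)+O(q_n^4/R^4)$.

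\emph{Entries of $\mathbf{G}$.} The diagonal entries of $\mathbf{G}$ are $\kappa^2\sum_n w_n$ times the empirical variances of the two components:
\begin{itemize}
\item cross-range entry: $\approx\kappa^2 W\,\mathrm{Var}\{q_n\}/R^2$;
\item range entry: $\approx\kappa^2 W\,\mathrm{Var}\{q_n^2\}/(4R^4)$.
\end{itemize}
Their ratio is exactly $\mathrm{Var}\{q_n^2\}/(4R^2\mathrm{Var}\{q_n\})$, which is the first claim.

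\emph{Uniform-aperture limit.} For $q$ uniform on $[-A/2,A/2]$:
\begin{itemize}
\item $\mathrm{Var}\{q\}=A^2/12$;
\item $\mathbb{E}q^4=A^4/80$ and $(\mathbb{E}q^2)^2=A^4/144$, so $\mathrm{Var}\{q^2\}=A^4(1/80-1/144)=A^4/180$.
\end{itemize}
The ratio is therefore $(A^4/180)/(4R^2A^2/12)=A^2/(60R^2)$.

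\emph{PEB ratio.} For \eqref{eq:PEBratio}, the off-diagonal entry is the covariance of $q_n$ and $q_n^2$. This vanishes for a symmetric (uniform) aperture, so $\mathbf{G}$ is diagonal in range/cross-range coordinates. Each PEB component is then proportional to the inverse square root of its diagonal entry. Both entries share the same SNR factor $|p|^2|\alpha_t|^2/\sigma_{\rm R}^2$ from Lemma~\ref{th:factor}, and the same number of snapshots, so these cancel in the ratio and give $\sqrt{60}R/A$.

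\emph{Main obstacle.} The delicate step is the error bookkeeping. The range entry is itself of order $A^4/R^4$, so I must check that neglected corrections cannot dominate it. Two sources need control:
\begin{itemize}
\item The $O(A^2/R^2)$ weight perturbations multiply already-centered quantities, so they enter only at relative order $A^2/R^2$ in each entry.
\item The next term $O(q_n^4/R^4)$ in the range component contributes to the variance at order $A^6/R^6$, which is subleading.
\end{itemize}
The amplitude term dropped in Lemma~\ref{lem:gdop} is covered by the stated $(\kappa R)^{-2}$ qualifier. The proof therefore carries an implicit assumption that this dropped term is small compared with $A^2/R^2$, i.e.\ $\kappa A\gg 1$, as for any array that is not subwavelength.
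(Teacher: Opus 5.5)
Your proposal is correct and follows the paper's proof essentially step for step: the same second-order bearing expansion with equal weights, the same variance expressions for the two diagonal entries of $\mathbf{G}$, the cross term vanishing by symmetry, and the same uniform-aperture moments $A^2/12$ and $A^4/180$, with your error bookkeeping as an addition the paper omits. The one refinement is that your extra assumption $\kappa A\gg1$ is unnecessary. The dropped term $\mathbf{E}^{\top}\mathbf{W}\mathbf{E}$ is the weighted covariance of $\mathbf{u}_n/r_n$, whose range and cross-range entries are approximately $\sum_n w_n\,\mathrm{Var}\{q_n^2\}/R^6$ and $\sum_n w_n\,\mathrm{Var}\{q_n\}/R^4$, i.e., $4/(\kappa R)^2$ and $1/(\kappa R)^2$ times the corresponding phase entries, so the $(\kappa R)^{-2}$ qualifier holds entry by entry.
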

\begin{IEEEproof}
For a broadside target at $(0,R)$ and elements at $(q_n,0)$, a second-order expansion of the bearing vectors of Lemma~\ref{lem:gdop} gives $u_{x,n}\approx q_n/R$ and $u_{y,n}\approx-1+q_n^2/(2R^2)$, so that, with equal weights $w$, \eqref{eq:gdop} yields $[\mathbf{G}]_{\rm cross}=N\kappa^2w\,\mathrm{Var}\{q_n\}/R^2$, $[\mathbf{G}]_{\rm range}=N\kappa^2w\,\mathrm{Var}\{q_n^2\}/(4R^4)$, and a zero cross term by symmetry. Their ratio is \eqref{eq:compact}, and the limit follows from $\mathrm{Var}\{q\}=A^2/12$ and $\mathrm{Var}\{q^2\}=A^4/180$ for a uniform aperture.
\end{IEEEproof}

Proposition~\ref{prop:compact} also justifies the spherical-wave model \eqref{eq:pa_channel}, since a $10$ m aperture at $28$ GHz has a Fraunhofer distance of about $19$ km. Conversely, it explains why secure ISAC designs with conventional arrays estimate the target angle only. For a $\lambda/2$-spaced array of $4$ elements at $28$ GHz and $R=4$ m, the range PEB is about $1700$ times the cross-range PEB. This limitation cannot be lifted by more snapshots or by the transmit side, as indicated in Remark~\ref{rem:txaperture}, and applies equally to movable and fluid antennas. A PASS, in contrast, obtains bearing vectors that differ by $\mathcal{O}(1)$ across as few as three receive waveguides.

The in-waveguide attenuation affects the PEB as follows. Since the factor $e^{-\alpha x_t/2}$ is common to all channels in \eqref{eq:pa_channel}, the illumination scales as $e^{-\alpha x_t/2}$, the weights in \eqref{eq:W} scale as $e^{-\alpha x_t}$, and $\mathbf{D}$ in \eqref{eq:D} depends only on the positions of the PAs relative to the target. Hence, the optimal relative positions do not depend on $x_t$, and the PEB satisfies
\begin{equation}\label{eq:pebexp}
\PEB\propto e^{\alpha x_t},
\end{equation}
i.e., the attenuation is paid twice, on the way to the transmit PAs and on the way back from the receive PAs, and the PEB doubles every $\ln 2/\alpha$ meters.

\section{Problem Formulation and Proposed Solution}\label{sec:design}
In the secure SLP designs of \cite{Su2022secureDFRC,Jia2026,Su2025BCRB,Bozanis2026robust}, the transmit vector is the only design variable. Here, the PA positions on both the transmit and the receive side are also design variables. Since activating a PA at a new point is a mechanical operation, the practical mode of operation of a PASS is to hold the PA positions fixed over many symbol intervals, whereas the transmit vector and, with it, the PEB change with each symbol vector. Therefore, placement must account for every symbol vector that may be transmitted, and the design objective is the PEB averaged over the $M^K$ equiprobable symbol vectors. Hence, the design problem is formulated as
\begin{equation}
\begin{aligned}
\min_{\tilde{\bx}^{\rm T},\,\tilde{\bx}^{\rm R},\,\bx(\mathbf{s})}\quad&\mathbb{E}_{\mathbf{s}}\left[\PEB\right]\\
\text{s.t.}\quad
&\mathrm{C}_1:\ \Re\{\tilde\bh_{k,i}^H\bx\}\ge\gamma_k,\ i\in\{1,2\},\ \forall k\in\mathcal{K},\\
&\mathrm{C}_2:\ \bx\in\mathcal{D}_{\rm DI},\\
&\mathrm{C}_3:\ |x_n|^2\le P/N_{\rm T},\ \forall n\in\mathcal{N}_{\rm T},\\
&\mathrm{C}_4:\ \tilde x^{\rm T}_n,\ \tilde x^{\rm R}_m\in[0,D_x],\ \forall n\in\mathcal{N}_{\rm T},\ m\in\mathcal{N}_{\rm R},
\end{aligned}\tag{P1}\label{P1}
\end{equation}
where $\mathrm{C}_1$--$\mathrm{C}_3$ are imposed for every symbol vector $\mathbf{s}$. In particular, $\mathrm{C}_1$ ensures that the observation of each Bob lies inside the CI region of its intended symbol, as in \eqref{eq:CIlin}, $\mathrm{C}_2$ ensures that the observation of the target/Eve lies inside the DI region \eqref{eq:DIunion}, $\mathrm{C}_3$ is the per-waveguide power constraint \eqref{eq:power}, and $\mathrm{C}_4$ confines each PA to its waveguide. By Lemma~\ref{th:factor}, the PEB of each symbol vector equals $c(\tilde{\bx}^{\rm R})/|\bh_e^H\bx|$, where the geometric term $c(\tilde{\bx}^{\rm R})$ depends only on the receive PA positions, which are common to all symbol vectors, whereas the illumination depends on the symbol vector through $\bx$. Hence, the objective of \eqref{P1} can be written as
\begin{equation}\label{eq:objsep}
\mathbb{E}_{\mathbf{s}}\left[\PEB\right]=c(\tilde{\bx}^{\rm R})\,\mathbb{E}_{\mathbf{s}}\left[\frac{1}{\left|\bh_e^H\bx\right|}\right],
\end{equation}
which yields the following exact decomposition of \eqref{P1} into three subproblems.

\begin{proposition}\label{prop:sep}
Problem \eqref{P1} is equivalent to the sequential solution of the following three problems. The first is the receive-placement problem
\begin{equation}
\min_{\tilde{\bx}^{\rm R}\in[0,D_x]^{N_{\rm R}}}\ c(\tilde{\bx}^{\rm R}),\tag{P1-R}\label{P1R}
\end{equation}
which is independent of the Bobs, the symbols, and the transmit side. The second is the per-symbol SLP problem, solved for every symbol vector and every transmit placement,
\begin{equation}
p^\star(\mathbf{s};\tilde{\bx}^{\rm T})=\max_{\bx}\ \left|\bh_e^H\bx\right|\quad\text{s.t.}\ \mathrm{C}_1,\ \mathrm{C}_2,\ \mathrm{C}_3.\tag{P1-S}\label{P1S}
\end{equation}
The third is the transmit-placement problem
\begin{equation}
\min_{\tilde{\bx}^{\rm T}\in[0,D_x]^{N_{\rm T}}}\ \mathbb{E}_{\mathbf{s}}\left[\frac{1}{p^\star(\mathbf{s};\tilde{\bx}^{\rm T})}\right].\tag{P1-T}\label{P1T}
\end{equation}
\end{proposition}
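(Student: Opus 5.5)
The plan is to start from the separable form \eqref{eq:objsep}, which follows from Lemma~\ref{th:factor}, and show that the joint minimization in \eqref{P1} splits into nested minimizations over disjoint variable blocks. The variables of \eqref{P1} are the receive positions $\tilde{\bx}^{\rm R}$, the transmit positions $\tilde{\bx}^{\rm T}$, and the family of transmit vectors $\{\bx(\mathbf{s})\}_{\mathbf{s}}$, one per symbol vector.

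\textbf{Step 1: the feasible set is a product.} I first check that the feasible set of \eqref{P1} is a Cartesian product.
\begin{itemize}
\item $\mathrm{C}_1$ depends only on $\bh_k=\mathbf{g}(\bpsi_k)^*$, and $\mathrm{C}_2$ only on $\bh_e=\mathbf{g}(\bpsi_t)^*$. Both involve $\tilde{\bx}^{\rm T}$ and $\bx(\mathbf{s})$ but never $\tilde{\bx}^{\rm R}$, since the receive positions enter the model only through $\ba(\cdot)$.
\item $\mathrm{C}_3$ involves only $\bx(\mathbf{s})$.
\item $\mathrm{C}_4$ is a box constraint, separate in $\tilde{\bx}^{\rm T}$ and $\tilde{\bx}^{\rm R}$.
\item For fixed $\tilde{\bx}^{\rm T}$, the constraints on the different $\bx(\mathbf{s})$ are decoupled across $\mathbf{s}$.
\end{itemize}
Hence the feasible set has the form $[0,D_x]^{N_{\rm R}}\times\{(\tilde{\bx}^{\rm T},\{\bx(\mathbf{s})\}):\tilde{\bx}^{\rm T}\in[0,D_x]^{N_{\rm T}},\ \bx(\mathbf{s})\in\mathcal{F}(\mathbf{s};\tilde{\bx}^{\rm T})\ \forall\mathbf{s}\}$. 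Here $\mathcal{F}(\mathbf{s};\tilde{\bx}^{\rm T})$ is the set defined by $\mathrm{C}_1$--$\mathrm{C}_3$.

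\textbf{Step 2: separate the receive placement.} The objective \eqref{eq:objsep} is the product of $c(\tilde{\bx}^{\rm R})$, which depends only on the first block, and $F\triangleq\mathbb{E}_{\mathbf{s}}[1/|\bh_e^H\bx(\mathbf{s})|]$, which depends only on the second. On any feasible point where the objective is finite, both factors are positive. Indeed, $\mathrm{C}_2$ forces $\Re\{\tilde\ba_i^H(s')\bx\}\ge\gamma_e>0$, so $|\bh_e^H\bx|>0$. Also $c>0$, with $c=+\infty$ allowed when $\mathbf{G}$ is singular.

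Since both factors are nonnegative and the blocks are independent, $\inf(c\cdot F)=(\inf c)(\inf F)$. Any minimizer therefore pairs a minimizer of \eqref{P1R} with a minimizer of $F$, and conversely. The independence claims in \eqref{P1R} then follow from Step~1, because $c$ and its domain involve no Bob, symbol, or transmit quantity.

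\textbf{Step 3: reduce the transmit side to per-symbol problems.} For fixed $\tilde{\bx}^{\rm T}$, I minimize $F$ over $\{\bx(\mathbf{s})\}$. The expectation is a finite sum with positive weights $M^{-K}$, and the constraints decouple across $\mathbf{s}$. So the minimum of the sum equals the sum of the per-symbol minima of $1/|\bh_e^H\bx(\mathbf{s})|$. Each such minimum is $1/p^\star(\mathbf{s};\tilde{\bx}^{\rm T})$, because $t\mapsto1/t$ is strictly decreasing on $t>0$; this gives \eqref{P1S}.

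Finally, $\min_{\tilde{\bx}^{\rm T},\{\bx\}}F=\min_{\tilde{\bx}^{\rm T}}\min_{\{\bx\}}F$, which is \eqref{P1T}. Equivalence of the optimal values and of the solution sets follows by reassembling the minimizers.

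\textbf{Main obstacle.} The delicate point is the handling of infeasibility and attainment rather than the algebra.
\begin{itemize}
\item \emph{Infeasible symbols.} If $\mathcal{F}(\mathbf{s};\tilde{\bx}^{\rm T})$ is empty for some $\mathbf{s}$, I set $p^\star=0$ with the convention $1/0=+\infty$. Then \eqref{P1T} correctly excludes that placement, consistent with \eqref{P1} requiring feasibility for every $\mathbf{s}$.
\item \emph{Existence of maximizers.} $\mathcal{F}$ is closed and bounded by $\mathrm{C}_3$, and $|\bh_e^H\bx|$ is continuous, so the maximum in \eqref{P1S} is attained. The nonconvex union in $\mathrm{C}_2$ is a finite union of closed sets and so causes no difficulty.
\item \emph{Ties in the infimum argument.} I must also check that $\inf(cF)=(\inf c)(\inf F)$ remains valid when one infimum is infinite. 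This is the product rule for nonnegative extended reals under the convention that the product is $+\infty$ unless one factor is zero, and positivity rules out zero factors.
\end{itemize}
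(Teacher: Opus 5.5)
Your proposal is correct and follows essentially the same argument as the paper: \eqref{eq:objsep} is a product of two positive factors over disjoint variable blocks, and $\mathrm{C}_1$--$\mathrm{C}_3$ do not involve $\tilde{\bx}^{\rm R}$, so the receive factor is minimized separately; for fixed $\tilde{\bx}^{\rm T}$, the second factor is then minimized term by term over the decoupled per-symbol constraint sets. Your added treatment of infeasible symbols, attainment in \eqref{P1S}, and products of extended reals is more careful than the paper's proof, but it follows the same route. One small tightening: positivity of each value does not by itself keep the infima away from zero, but here both infima are bounded away from zero because $\mathbf{G}$ and the illumination are bounded on the compact feasible sets.
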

\begin{IEEEproof}
By \eqref{eq:objsep}, the objective of \eqref{P1} is the product of two positive factors that depend on disjoint sets of variables, and $\mathrm{C}_1$--$\mathrm{C}_3$ do not involve $\tilde{\bx}^{\rm R}$. Hence, the two factors are minimized separately, which yields \eqref{P1R}. For fixed $\tilde{\bx}^{\rm T}$, the second factor is a sum of terms, each depending on its own $\bx(\mathbf{s})$ under its own copy of $\mathrm{C}_1$--$\mathrm{C}_3$, so that it is minimized term by term, which yields \eqref{P1S} and, in turn, \eqref{P1T}.
\end{IEEEproof}

Proposition~\ref{prop:sep} contrasts with the joint placement and beamforming designs of the PASS literature, which couple the PA positions and the beamformers through the objective and update them by alternating optimization, in PASS communication \cite{Wang2025modeling}, PLS \cite{Papanikolaou2025AN}, and ISAC \cite{Li2025PASSISAC,Song2026} alike. Here, the factorization of Lemma~\ref{th:factor} separates the receive placement exactly and nests the per-symbol precoding inside the transmit placement, so that no iteration between the blocks is required and no optimality is lost.

\subsection{Receive Pinching-Antenna Placement}
By Proposition~\ref{prop:sep}, the receive PAs are placed by solving \eqref{P1R}. Since the receive channel enters $\mathbf{G}$ only through the magnitudes $|\ba_t|^{\circ 2}$ and the logarithmic derivatives $\mathbf{D}$, in which both the guided and the free-space phases of $\ba_t$ cancel by \eqref{eq:logder}, $c(\tilde{\bx}^{\rm R})$ is a smooth function without wavelength-scale oscillations. Problem \eqref{P1R} is therefore a smooth low-dimensional problem, which is nonconvex in general and admits no closed-form solution for an arbitrary layout. A closed-form characterization is obtained, however, for a symmetric layout of three receive waveguides, which is given by the following corollary of Lemma~\ref{lem:gdop}.

\begin{corollary}\label{lem:stagger}
Let $N_{\rm R}=3$ receive waveguides be located at $\tilde y^{\rm R}\in\{y_t-a,\,y_t,\,y_t+a\}$, let the PA of the middle waveguide be at $x_t$ and the PAs of the outer waveguides at $x_t+\delta$, let $r_\perp=\sqrt{a^2+d^2}$ be the perpendicular distance from the target to the outer waveguides, and let $\alpha=0$. Then, the information on $x_t$ is
\begin{equation}\label{eq:Gxx}
[\mathbf{G}]_{x_tx_t}=\frac{2\eta\kappa^2\,\delta^2}{\left(\delta^2+r_\perp^2\right)\left(\delta^2+r_\perp^2+2d^2\right)},
\end{equation}
which is maximized at $\delta=\delta^\star_{x}$, with
\begin{equation}\label{eq:dstar}
\delta^\star_{x}=\left[r_\perp^2\left(r_\perp^2+2d^2\right)\right]^{1/4}.
\end{equation}
Moreover, the PEB is minimized at $\delta=\delta^\star$, where $\delta^\star=\sqrt{t^\star}<r_\perp$ and $t^\star$ is the unique positive root of the cubic
\begin{equation}\label{eq:cubic}
\frac{2}{a^2}t^3+\left(1+\frac{2\rho}{a^2}\right)t^2-\rho\,(\rho+2d^2)=0,
\end{equation}
in which $t=\delta^2$ and $\rho=r_\perp^2$.
\end{corollary}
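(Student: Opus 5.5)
The plan is to evaluate the GDOP form \eqref{eq:gdop} of Lemma~\ref{lem:gdop} in closed form for the given symmetric layout, read off the entries of $\mathbf{G}$, and then optimize over $t=\delta^2$ by elementary calculus.

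\emph{Geometry and weights.} Place the target at $(x_t,y_t,0)$. The middle PA is at $(x_t,y_t,d)$, so $r_2=d$ and its bearing vector is $\mathbf{u}_2=\mathbf{0}$. The outer PAs are at $(x_t+\delta,y_t\pm a,d)$, so $r_\pm=\sqrt{\delta^2+\rho}$, writing $T\triangleq t+\rho$ with $r_\pm^2=T$. Their bearing vectors are $\mathbf{u}_\pm=[\delta,\ \pm a]^{\top}/\sqrt{T}$. With $\alpha=0$, \eqref{eq:pa_channel} gives the weights $w_2=\eta/d^2$ and $w_\pm=\eta/T$.

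\emph{Entries of $\mathbf{G}$.} By symmetry, $\bar{\mathbf{u}}$ in \eqref{eq:ubar} has zero $y$-component. Its $x$-component is
\[
\bar u_x=\frac{2w_\pm\,\delta/\sqrt{T}}{w_2+2w_\pm}.
\]
\begin{itemize}
\item The $(x_t,x_t)$ entry is $\kappa^2$ times the weighted variance of a two-valued variable: the value $0$ with weight $w_2$ and the value $\delta/\sqrt{T}$ with weight $2w_\pm$. This variance equals $\frac{w_2\cdot 2w_\pm}{w_2+2w_\pm}\,\frac{\delta^2}{T}$. Substituting the weights gives $\frac{2\eta}{T+2d^2}\cdot\frac{\delta^2}{T}$, which is \eqref{eq:Gxx}.
\item The off-diagonal entry is $\sum_m w_m(u_{x,m}-\bar u_x)u_{y,m}$. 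The two outer terms cancel because their $u_y$ values are opposite and their $u_x$ values are equal, and the middle term vanishes because $u_{y,2}=0$.
\item The $(y_t,y_t)$ entry is $[\mathbf{G}]_{y_ty_t}=2\kappa^2 w_\pm a^2/T=2\eta\kappa^2a^2/T^2$.
\end{itemize}
Since $\mathbf{G}$ is diagonal, $\tr(\mathbf{G}^{-1})=1/[\mathbf{G}]_{x_tx_t}+1/[\mathbf{G}]_{y_ty_t}$. All of this holds up to the relative order $(\kappa r_m)^{-2}$ inherited from Lemma~\ref{lem:gdop}.

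\emph{Optimization of $[\mathbf{G}]_{x_tx_t}$.} Maximize $f(t)=t/\big((t+\rho)(t+\rho+2d^2)\big)$ over $t>0$. Setting $f'(t)=0$ gives
\[
(t+\rho)(t+\rho+2d^2)-t(2t+2\rho+2d^2)=\rho(\rho+2d^2)-t^2=0,
\]
so $t=\sqrt{\rho(\rho+2d^2)}$, which is \eqref{eq:dstar}. This stationary point is a maximum because $f(0)=0$ and $f(t)\to0$ as $t\to\infty$.

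\emph{Optimization of the PEB.} By \eqref{eq:PEBfactor}, minimizing the PEB is equivalent to minimizing, up to the positive constant $2\eta\kappa^2$,
\[
h(t)=\frac{(t+\rho)(t+\rho+2d^2)}{t}+\frac{(t+\rho)^2}{a^2}.
\]
Expanding the first term as $t+2\rho+2d^2+\rho(\rho+2d^2)/t$ gives
\[
h'(t)=1-\frac{\rho(\rho+2d^2)}{t^2}+\frac{2(t+\rho)}{a^2}.
\]
Multiplying $h'(t)=0$ by $t^2$ yields exactly the cubic \eqref{eq:cubic}.

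\emph{Uniqueness of $t^\star$ and the bound $\delta^\star<r_\perp$.} For $t>0$ the cubic is strictly increasing, since all its $t$-dependent coefficients are positive. Its value at $t=0$ is $-\rho(\rho+2d^2)<0$. Hence it has a unique positive root $t^\star$, and $h'$ changes sign from negative to positive there, so $t^\star$ is the global minimizer. To show $t^\star<\rho$, evaluate the cubic at $t=\rho$:
\[
\frac{2\rho^3}{a^2}+\rho^2+\frac{2\rho^3}{a^2}-\rho^2-2\rho d^2=\frac{2\rho\,(2\rho^2-a^2d^2)}{a^2}.
\]
This is positive because $\rho^2=(a^2+d^2)^2\ge4a^2d^2$. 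Monotonicity then gives $t^\star<\rho$, i.e., $\delta^\star<r_\perp$.

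The main obstacle is bookkeeping rather than concept: computing the weighted mean $\bar{\mathbf{u}}$ correctly, and confirming that the off-diagonal term vanishes exactly, so that the trace splits into the two scalar terms used above.
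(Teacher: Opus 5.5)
Your proof is correct and follows essentially the same route as the paper's: you evaluate the weighted bearing covariance \eqref{eq:gdop} for the symmetric layout to get a diagonal $\mathbf{G}$, then optimize over $t=\delta^2$, using monotonicity of the cubic for uniqueness and its sign at $t=\rho$ to show $\delta^\star<r_\perp$. The differences are cosmetic: you derive $[\mathbf{G}]_{x_tx_t}$ explicitly through the two-point weighted-sum identity, and you show the cubic is positive at $t=\rho$ via $(a^2+d^2)^2\ge 4a^2d^2$ instead of $\rho\ge a^2$ and $\rho>d^2$.
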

\begin{IEEEproof}
The proof is presented in Appendix~\ref{app:stagger}.
\end{IEEEproof}

Corollary~\ref{lem:stagger} should be contrasted with the communication-optimal placement of \cite{Xu2025rate,Tyrovolas2025}, which places each PA at zero offset from the served node. For sensing, this PA is the least informative about $x_t$, since a small displacement of the target along the waveguide changes its distance only to second order. The information on $x_t$ is maximized at the offset $\delta^\star_x>r_\perp$, whereas the information on $y_t$ is maximized at $\delta=0$, as shown by \eqref{eq:Gsym} in Appendix~\ref{app:stagger}, so that the PEB-optimal offset $\delta^\star<r_\perp$ is a compromise between the two.

For an arbitrary layout, \eqref{P1R} is solved by a cyclic one-dimensional search. Starting from an initial placement, the receive PAs are updated one at a time, with the remaining PAs fixed, as
\begin{equation}\label{eq:cyclic}
\tilde x^{\rm R}_m\leftarrow\arg\min_{\tilde x\in[0,D_x]}\ c\big(\tilde{\bx}^{\rm R}\big)\Big|_{\tilde x^{\rm R}_m=\tilde x},\qquad m\in\mathcal{N}_{\rm R},
\end{equation}
where the one-dimensional problem in \eqref{eq:cyclic} is solved by a global grid search over $[0,D_x]$ followed by a local refinement around the best grid point, which is reliable because $c(\cdot)$ is smooth. The cycles over $\mathcal{N}_{\rm R}$ are repeated until the relative decrease of $c(\tilde{\bx}^{\rm R})$ over a cycle falls below a tolerance. Since no update in \eqref{eq:cyclic} increases the objective, the sequence of objectives converges, and its limit is a coordinate-wise minimum of \eqref{P1R}, which depends on the initial placement.

The initial placement is obtained from Corollary~\ref{lem:stagger} as follows. Although the corollary is restricted to three waveguides, the offset \eqref{eq:dstar} depends on a waveguide only through its perpendicular distance from the target, which is defined for any receive waveguide as
\begin{equation}\label{eq:rperpm}
r_{\perp,m}=\sqrt{\left(\tilde y^{\rm R}_m-y_t\right)^2+d^2},\qquad m\in\mathcal{N}_{\rm R},
\end{equation}
whereas the PEB-optimal offset $\delta^\star$ depends also on the lateral spacing $a$ of the layout. Hence, \eqref{eq:dstar} is applied to every receive waveguide individually, and the initial placement is
\begin{equation}\label{eq:init}
\tilde x^{\rm R}_{m,0}=\Big[x_t+\sigma_m\left[r_{\perp,m}^2\left(r_{\perp,m}^2+2d^2\right)\right]^{1/4}\Big]_0^{D_x},\qquad m\in\mathcal{N}_{\rm R},
\end{equation}
where $[\cdot]_0^{D_x}$ denotes the projection onto $[0,D_x]$ and $\boldsymbol\sigma=[\sigma_1,\dots,\sigma_{N_{\rm R}}]^{\top}\in\{-1,+1\}^{N_{\rm R}}$ is a sign pattern, which determines the side of the target on which each PA lies. Since $c(\cdot)$ can have a local minimum for every sign pattern, and opposite signs on the outer waveguides of Corollary~\ref{lem:stagger} would even render the bearing vectors collinear and $x_t$ unobservable by Proposition~\ref{prop:ident}, the search \eqref{eq:cyclic} is run from all sign patterns and the best solution is retained, which is inexpensive because $c(\cdot)$ is evaluated in closed form and $N_{\rm R}$ is small.
\subsection{Symbol-Level Precoding for Fixed PA Positions}
By Proposition~\ref{prop:sep}, for given PA positions the transmit vector of every symbol interval is obtained from \eqref{P1S}, which is nonconvex, since $\mathrm{C}_2$ is a union of $M-1$ convex cones and the convex function $|\bh_e^H\bx|$ is maximized. Both difficulties are removed by treating each cone $\mathcal{D}(s')$ separately, since the cone confines the phase of the illumination to within $\Phi$ of the phase of $s'$, so that $|\bh_e^H\bx|$ can be replaced by its projection on the known direction of $s'$, which is linear in $\bx$. Unlike the secure SLP-ISAC designs of \cite{Su2022secureDFRC,Su2025BCRB,Bozanis2026robust}, in which the sensing objective is linearized around the current iteration and the resulting convex problem is solved repeatedly, no iterations are therefore needed, as shown in the following proposition.

\begin{proposition}\label{prop:exactDI}
For every wrong symbol $s'\ne s_1$, consider the convex problem
\begin{equation}
\begin{aligned}
\max_{\bx}\quad&\Re\{e^{-j\angle s'}\bh_e^H\bx\}\\
\text{s.t.}\quad&\mathrm{C}_1,\ \mathrm{C}_3,\ \bx\in\mathcal{D}(s').
\end{aligned}\tag{P2-$s'$}\label{P2}
\end{equation}
Let $\bx^\star_{s'}$ denote the solution of \eqref{P2}, and let $\hat{\bx}$ be the solution among $\bx^\star_{s'}$, $s'\ne s_1$, with the largest illumination $|\bh_e^H\bx^\star_{s'}|$. Then, $\hat{\bx}$ is feasible for \eqref{P1S}, and its illumination is within a worst-case factor $\cos\Phi$ of the global optimum of \eqref{P1S}, which tends to one as $M$ grows.
\end{proposition}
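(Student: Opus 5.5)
Two things need to be shown: that $\hat{\bx}$ is feasible for \eqref{P1S}, and that its illumination is at least $\cos\Phi$ times the optimum. The plan is to split the nonconvex feasible set of \eqref{P1S} along the union \eqref{DIunion}. Then, on each cone, we compare the linear objective of \eqref{P2} with $|\bh_e^H\bx|$ using the angular confinement that the cone imposes.

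\emph{Convexity and feasibility.} First, \eqref{P2} is convex: the objective $\Re\{e^{-j\angle s'}\bh_e^H\bx\}$ is linear in the real and imaginary parts of $\bx$, $\mathrm{C}_1$ and $\mathcal{D}(s')$ are intersections of half-spaces by \eqref{eq:CIlin} and \eqref{eq:DI}, and $\mathrm{C}_3$ is a product of disks. Each $\bx^\star_{s'}$ therefore satisfies $\mathrm{C}_1$, $\mathrm{C}_3$, and $\bx^\star_{s'}\in\mathcal{D}(s')\subseteq\mathcal{D}_{\rm DI}$. Hence every candidate is feasible for \eqref{P1S}, and in particular so is $\hat{\bx}$. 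If some \eqref{P2} is infeasible, it is simply excluded. If all are infeasible, then \eqref{P1S} is infeasible too, because its feasible set is the union of the feasible sets of the \eqref{P2}.

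\emph{Angular sandwich.} The key step is to show that for every $\bx\in\mathcal{D}(s')$, writing $p=\bh_e^H\bx$ and $\theta=\angle p-\angle s'$, we have $|\theta|\le\Phi$. To see this, note that by \eqref{eq:ae12}, $\Re\{\tilde\ba_i^H(s')\bx\}=\Re\{e^{-j\angle s'}(\sin\Phi\mp j\cos\Phi)p\}=|p|\sin(\Phi\pm\theta)$. Both constraints $\ge\gamma_e>0$ then force $\sin(\Phi+\theta)>0$ and $\sin(\Phi-\theta)>0$, which for $\Phi\le\pi/3$ gives $|\theta|<\Phi$. Consequently, on $\mathcal{D}(s')$,
\begin{equation*}
|p|\cos\Phi\le|p|\cos\theta=\Re\{e^{-j\angle s'}p\}\le|p|.
\end{equation*}

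\emph{Approximation factor.} Let $\bx^{\rm opt}$ be a global maximizer of \eqref{P1S}. It lies in some $\mathcal{D}(s'_0)$ and is feasible for \eqref{P2} with $s'=s'_0$. Combining optimality of $\bx^\star_{s'_0}$ in that problem with the sandwich applied at both points gives the chain
\begin{equation*}
|\bh_e^H\hat{\bx}|\ge|\bh_e^H\bx^\star_{s'_0}|\ge\Re\{e^{-j\angle s'_0}\bh_e^H\bx^\star_{s'_0}\}\ge\Re\{e^{-j\angle s'_0}\bh_e^H\bx^{\rm opt}\}\ge\cos\Phi\,|\bh_e^H\bx^{\rm opt}|.
\end{equation*}
Since $\Phi=\pi/M$, we have $\cos\Phi\to1$ as $M\to\infty$.

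The main obstacle is the angular sandwich, specifically checking that the sign conventions in \eqref{eq:ae12} really produce $\sin(\Phi\pm\theta)$ and that $\Phi<\pi/2$ rules out the spurious branch. I would verify this by direct expansion. Attainment of the maxima needs only compactness, which follows from $\mathrm{C}_3$.
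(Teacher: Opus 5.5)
Your proposal is correct and follows essentially the same route as the paper. Both arguments split the feasible set of \eqref{P1S} along $\mathcal{D}_{\rm DI}=\bigcup_{s'\ne s_1}\mathcal{D}(s')$, apply the sandwich $|p|\cos\Phi\le\Re\{e^{-j\angle s'}p\}\le|p|$ on each cone, and chain optimality in \eqref{P2} with the best-of-$(M-1)$ selection; your explicit check that $\Re\{\tilde\ba_i^H(s')\bx\}=|p|\sin(\Phi\pm\theta)$ forces $|\theta|<\Phi$ for $\Phi<\pi/2$ is correct and fills in a step the paper only asserts.
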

\begin{IEEEproof}
Since $\mathcal{D}_{\rm DI}$ is the union of the cones $\mathcal{D}(s')$, the maximum of $|\bh_e^H\bx|$ over the feasible set of \eqref{P1S} equals the largest of its maxima over the $M-1$ feasible sets of \eqref{P2}, so that treating each cone separately is exact. Inside $\mathcal{D}(s')$, the rotated observation $e^{-j\angle s'}p$ lies in a cone of half-angle $\Phi$ around the positive real axis, hence
\begin{equation}\label{eq:conebound}
|p|\cos\Phi\le\Re\left\{e^{-j\angle s'}p\right\}\le|p|.
\end{equation}
By the left inequality, the optimal value of \eqref{P2} is at least $\cos\Phi$ times the maximum of $|p|$ over the same feasible set, and, by the right inequality, the illumination of $\bx^\star_{s'}$ is at least the optimal value of \eqref{P2}. Taking the best of the $M-1$ solutions gives the bound with respect to the global optimum of \eqref{P1S}, which completes the proof.
\end{IEEEproof}

Each problem \eqref{P2} is a second-order cone program with a linear objective, $2K+2$ half-space constraints and $N_{\rm T}$ disc constraints, and can be solved by any convex solver. However, its structure is simple enough for a dedicated solution, which is worthwhile because \eqref{P2} is solved for every symbol vector and every candidate position. The following lemma provides its solution in closed form, up to the Lagrange multipliers of its half-space constraints, which are, in turn, obtained from a convex problem of dimension $2K+2$.

\begin{lemma}\label{lem:kkt}
Let $\bar{\ba}_i$ and $b_i$, $i=1,\dots,2K+2$, collect the vectors and thresholds of the CI and DI constraints of \eqref{P2}, i.e., $\tilde\bh_{k,i}$ and $\gamma_k$ in \eqref{eq:CIlin} for $i=1,\dots,2K$, and $\tilde\ba_i(s')$ and $\gamma_e$ in \eqref{eq:DI} for $i=2K+1,2K+2$, and let $\mathbf{c}=e^{j\angle s'}\bh_e$, so that the objective of \eqref{P2} is $\Re\{\mathbf{c}^H\bx\}$ and its CI and DI constraints are the half-spaces $\Re\{\bar{\ba}_i^H\bx\}\ge b_i$, $i=1,\dots,2K+2$. Let $\mu_i\ge0$ denote the Lagrange multiplier of the $i$-th half-space constraint, let $\boldsymbol\mu=[\mu_1,\dots,\mu_{2K+2}]^{\top}$, and let
\begin{equation}\label{eq:zeta}
\boldsymbol\zeta(\boldsymbol\mu)=\mathbf{c}+\sum_{i=1}^{2K+2}\mu_i\,\bar{\ba}_i.
\end{equation}
If \eqref{P2} is feasible and satisfies Slater's condition, the optimal multipliers $\boldsymbol\mu^\star$ are the solution of the convex problem
\begin{equation}\label{eq:dual}
\min_{\boldsymbol\mu\ge\mathbf{0}}\ \sqrt{\frac{P}{N_{\rm T}}}\sum_{n=1}^{N_{\rm T}}\left|\zeta_n(\boldsymbol\mu)\right|-\sum_{i=1}^{2K+2}\mu_ib_i,
\end{equation}
whose optimal value equals that of \eqref{P2}. Moreover, for every waveguide with $\zeta_n(\boldsymbol\mu^\star)\ne0$, the power constraint is active and the solution of \eqref{P2} is
\begin{equation}\label{eq:kkt}
x^\star_n=\sqrt{\frac{P}{N_{\rm T}}}\,\frac{\zeta_n(\boldsymbol\mu^\star)}{\left|\zeta_n(\boldsymbol\mu^\star)\right|},
\end{equation}
whereas, for every waveguide with $\zeta_n(\boldsymbol\mu^\star)=0$, the power constraint is not necessarily active and $x^\star_n$ is any primal-feasible value satisfying the equalities associated with the positive multipliers.
\end{lemma}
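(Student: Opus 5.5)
We prove Lemma~\ref{lem:kkt} by partial Lagrangian duality. Only the $2K+2$ half-space constraints of \eqref{P2} are dualized, while the per-waveguide discs $\mathrm{C}_3$ are kept as the primal domain. With multipliers $\mu_i\ge0$, the partial Lagrangian is
\begin{equation*}
\mathcal{L}(\bx,\boldsymbol\mu)=\Re\{\mathbf{c}^H\bx\}+\sum_{i=1}^{2K+2}\mu_i\left(\Re\{\bar{\ba}_i^H\bx\}-b_i\right)=\Re\{\boldsymbol\zeta(\boldsymbol\mu)^H\bx\}-\sum_{i}\mu_ib_i.
\end{equation*}
The dual function is $q(\boldsymbol\mu)=\max_{|x_n|^2\le P/N_{\rm T}}\mathcal{L}(\bx,\boldsymbol\mu)$. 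It separates across waveguides, since $\Re\{\boldsymbol\zeta^H\bx\}=\sum_n\Re\{\zeta_n^*x_n\}$. Each term is a linear functional maximized over a disc of radius $\sqrt{P/N_{\rm T}}$, so by Cauchy--Schwarz its maximum is $\sqrt{P/N_{\rm T}}\,|\zeta_n|$. When $\zeta_n\ne0$ this maximum is attained uniquely at $x_n=\sqrt{P/N_{\rm T}}\,\zeta_n/|\zeta_n|$. When $\zeta_n=0$ it is attained by any point of the disc. Substituting gives exactly the objective of \eqref{eq:dual}. That objective is convex in $\boldsymbol\mu$, because it is a sum of norms of affine functions of $\boldsymbol\mu$ minus a linear term.

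Next we invoke strong duality. \eqref{P2} is convex: it has a linear objective, affine half-space constraints and convex disc constraints. By hypothesis it is feasible and satisfies Slater's condition, and the feasible set is compact because of $\mathrm{C}_3$. Strong duality for the partially dualized problem then follows from standard convex duality, for example Sion's minimax theorem applied to the compact convex domain $\prod_n\{|x_n|^2\le P/N_{\rm T}\}$. Hence the optimal value of \eqref{P2} equals $\min_{\boldsymbol\mu\ge\mathbf{0}}q(\boldsymbol\mu)$, and a minimizer $\boldsymbol\mu^\star$ exists. This existence and attainment step is the one to handle carefully. Slater's condition guarantees that the dual optimal set is nonempty and bounded, so a minimizer exists even though the dual objective is not coercive in every direction.

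Finally, we recover the primal solution from the saddle-point property. Any primal optimum $\bx^\star$ together with $\boldsymbol\mu^\star$ satisfies complementary slackness, $\mu_i^\star(\Re\{\bar{\ba}_i^H\bx^\star\}-b_i)=0$. It also maximizes $\mathcal{L}(\cdot,\boldsymbol\mu^\star)$ over the discs. For $\zeta_n(\boldsymbol\mu^\star)\ne0$, uniqueness of the per-waveguide maximizer forces \eqref{eq:kkt}, so $|x_n^\star|^2=P/N_{\rm T}$ and the power constraint is active. For $\zeta_n(\boldsymbol\mu^\star)=0$, the Lagrangian is flat in $x_n$. Any value inside the disc maximizes it, so $x_n^\star$ is pinned down only by primal feasibility and by the active constraints $\Re\{\bar{\ba}_i^H\bx\}=b_i$ for $\mu_i^\star>0$. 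This is the stated non-uniqueness. The main obstacle is this degenerate case: the closed form alone does not determine $\bx^\star$ there, and one must argue from complementary slackness rather than from the Lagrangian maximizer.
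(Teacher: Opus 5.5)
Your proof is correct, and it reaches \eqref{eq:dual} and \eqref{eq:kkt} by a different decomposition than the paper.

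The paper dualizes \emph{all} constraints of \eqref{P2}, attaching multipliers $\nu_n\ge0$ to the discs. It then
\begin{enumerate}
\item derives the stationarity condition $2\nu_nx_n=\zeta_n$ by Wirtinger differentiation;
\item splits into the cases $\nu_n>0$ and $\nu_n=0$ via complementary slackness, which gives $\nu_n=|\zeta_n|/(2\sqrt{P/N_{\rm T}})$ and \eqref{eq:kkt};
\item eliminates $\boldsymbol\nu$ from the full dual function \eqref{eq:dualfun} by minimizing $|\zeta_n|^2/(4\nu_n)+\nu_nP/N_{\rm T}$ in closed form, invoking Slater for strong duality.
\end{enumerate}
You instead keep the discs as the primal domain. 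The dual function is then just the support function of a product of discs, evaluated by Cauchy--Schwarz. The primal solution follows from the saddle-point property together with uniqueness of the maximizer of a nonzero linear functional on a disc.

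What your route buys:
\begin{itemize}
\item It avoids complex differentiation, the case split on $\nu_n$, and the $0/0$ convention for $|\zeta_n|^2/(4\nu_n)$ at $\nu_n=\zeta_n=0$.
\item It makes the degenerate case $\zeta_n=0$ transparent, since the Lagrangian is flat in $x_n$.
\item It separates the hypotheses cleanly: compactness of the discs alone gives a zero duality gap by Sion's theorem, and Slater is needed only for the existence of $\boldsymbol\mu^\star$.
\item The same chain of inequalities gives, in one line, the converse implicit in the last clause of the lemma. That is, any feasible $\bx$ satisfying \eqref{eq:kkt} and complementary slackness with $\boldsymbol\mu^\star$ is optimal.
\end{itemize}

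The paper's full-Lagrangian route, in turn, exhibits the per-waveguide power multipliers $\nu_n^\star$ explicitly. The statement ``the power constraint is active'' can then be read off directly from $\nu_n^\star>0$. It also mirrors the KKT-based closed-form CI precoding derivations the paper builds on.
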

\begin{IEEEproof}
The proof is presented in Appendix~\ref{app:kkt}.
\end{IEEEproof}

By Lemma~\ref{lem:kkt}, the target/Eve enters \eqref{P2} in the same way as the Bobs, through the two half-spaces of $\mathcal{D}(s')$, i.e., it acts as a $(K+1)$-th virtual Bob with symbol $s'$ and SNR target $\Gamma_e$, so that \eqref{P2} inherits the closed-form structure of CI precoding for PSK \cite{Li2018closedform,Li2021multilevel}. In particular, the vector $\boldsymbol\zeta(\boldsymbol\mu^\star)$ in \eqref{eq:zeta} is the channel of the target corrected by the active CI and DI constraints, and, by \eqref{eq:kkt}, every waveguide with $\zeta_n(\boldsymbol\mu^\star)\ne0$ transmits at full power with the phase of this corrected channel. Moreover, every constraint of \eqref{P2} is a half-space or a disc, onto which the projection is available in closed form. Hence, in the numerical results, \eqref{P2} is solved by a consensus alternating direction method of multipliers (ADMM) \cite{Boyd2011ADMM} whose steps are these projections.



\subsection{Transmit Pinching-Antenna Placement}
The transmit placement is the only non-convex block that couples with the symbols: the guided phase of each waveguide is absorbed by $x_n$, but the free-space phases toward the Bobs and the target/Eve oscillate on the wavelength scale. We adopt the two-scale cyclic search that is standard for PASS \cite{Xu2025rate,Li2025PASSISAC}, with the difference that every candidate position is evaluated through the optimal value of the convex problem \eqref{P2}, so that feasibility of $\mathrm{C}_1$--$\mathrm{C}_3$ is preserved by construction. The transmit PAs are initialized with the communication-optimal rule of \cite{Xu2025rate,Tyrovolas2025}: the PA of the waveguide $\ell_k$ nearest to Bob $k$ is placed at
\begin{equation}\label{eq:txinit}
\tilde x^{\rm T}_{\ell_k}=x_k-\frac{\alpha\,r_{\perp,k}^2}{2},
\end{equation}
where $r_{\perp,k}$ is the perpendicular distance from Bob $k$ to that waveguide, and the remaining PAs are placed at the corresponding position of the target. If no subproblem \eqref{P2} is feasible for some symbol vector at this initialization, the waveguide midpoints and then the feasible layout of a $1$ m grid closest to \eqref{eq:txinit} are used instead. The objective of \eqref{P1T} is handled as follows. A common rotation of all symbols rotates the transmit vector by the same phase and leaves the illumination unchanged, so that the expectation over the $M^K$ symbol vectors reduces to the average $\mathcal{J}(\tilde{\bx}^{\rm T})$ of $1/p^\star(\mathbf{s};\tilde{\bx}^{\rm T})$ over the set $\mathcal{C}$ of the $M^{K-1}$ symbol vectors modulo a common rotation. Since $p^\star$ is the optimum of the nonconvex problem \eqref{P1S}, it is replaced by the illumination $\hat p(\mathbf{s};\tilde{\bx}^{\rm T})=|\bh_e^H\hat{\bx}|$ of the solution of Proposition~\ref{prop:exactDI}, which yields the surrogate objective
\begin{equation}\label{eq:J}
\hat{\mathcal{J}}(\tilde{\bx}^{\rm T})=\frac{1}{|\mathcal{C}|}\sum_{\mathbf{s}\in\mathcal{C}}\frac{1}{\hat p(\mathbf{s};\tilde{\bx}^{\rm T})}.
\end{equation}
Since $\cos\Phi\,p^\star\le\hat p\le p^\star$ by Proposition~\ref{prop:exactDI}, the surrogate satisfies $\mathcal{J}\le\hat{\mathcal{J}}\le\mathcal{J}/\cos\Phi$ for every transmit placement. Also, $\hat{\mathcal{J}}(\tilde{\bx}^{\rm T})=+\infty$ if, for some symbol vector, all $M-1$ subproblems \eqref{P2} are infeasible, and such placements are rejected, as summarized in Algorithm~\ref{alg:ao}.

\begin{algorithm}[t]
\caption{Proposed solution of \eqref{P1}}\label{alg:ao}
\begin{algorithmic}[1]
\Require Node positions, $P$, $\{\Gamma_k\}_{k\in\mathcal{K}}$, $\Gamma_e$, $\epsilon$
\Ensure $\tilde{\bx}^{\rm R}$, $\tilde{\bx}^{\rm T}$, $\hat{\bx}(\mathbf{s})$
\State Obtain $\tilde{\bx}^{\rm R}$ by \eqref{eq:cyclic} from every sign pattern of \eqref{eq:init}
\State Initialize $\tilde{\bx}^{\rm T}$ by \eqref{eq:txinit} or its feasible fallback
\Repeat
  \For{$n\in\mathcal{N}_{\rm T}$}
     \State Move $\tilde x^{\rm T}_n$ to the candidate with the smallest $\hat{\mathcal{J}}$ in \eqref{eq:J}, if it decreases $\hat{\mathcal{J}}$
  \EndFor
\Until{the relative decrease of $\hat{\mathcal{J}}$ is below $\epsilon$}
\State Compute $\hat{\bx}(\mathbf{s})$ by Proposition~\ref{prop:exactDI}
\end{algorithmic}
\end{algorithm}

\begin{proposition}\label{prop:conv}
Every placement accepted by Algorithm~\ref{alg:ao} is feasible for \eqref{P1} and strictly decreases $\hat{\mathcal{J}}$ in \eqref{eq:J}, so that the algorithm terminates after a finite number of cycles at a placement from which no move of a single transmit PA to a candidate position decreases $\hat{\mathcal{J}}$, i.e., at a coordinate-wise minimum of $\hat{\mathcal{J}}$ over the candidate positions. The complexity per cycle is
\begin{equation}\label{eq:complexity}
\mathcal{O}\left(N_{\rm T}\,N_{\rm c}\,|\mathcal{C}|\,(M-1)\,I_{\rm ADMM}\,K\,N_{\rm T}\right),
\end{equation}
where $N_{\rm c}$ is the number of candidate positions per PA and $I_{\rm ADMM}$ that of ADMM iterations.
\end{proposition}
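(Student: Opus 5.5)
The proof splits into three parts: feasibility of every accepted placement, finite termination at a coordinate-wise minimum, and a count of elementary operations per cycle.

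\emph{Feasibility.} By construction, a candidate is accepted only if $\hat{\mathcal{J}}<+\infty$. By the definition \eqref{eq:J}, this means that for every symbol vector in $\mathcal{C}$ at least one subproblem \eqref{P2} is feasible. Proposition~\ref{prop:exactDI} then guarantees that $\hat{\bx}(\mathbf{s})$ satisfies $\mathrm{C}_1$--$\mathrm{C}_3$. For a symbol vector outside $\mathcal{C}$, I would apply the common-rotation argument already used to reduce the expectation. If $\mathbf{s}=e^{j\theta}\mathbf{s}_0$ with $\mathbf{s}_0\in\mathcal{C}$, then $e^{j\theta}\hat{\bx}(\mathbf{s}_0)$ satisfies the rotated CI and DI half-spaces, since all rotated channels in \eqref{eq:hk12} and \eqref{eq:ae12} pick up the same factor $e^{j\theta}$. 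It also satisfies the per-waveguide modulus constraints $\mathrm{C}_3$. Constraint $\mathrm{C}_4$ holds because candidates lie in $[0,D_x]$, and the receive placement from \eqref{eq:cyclic} is projected onto $[0,D_x]$. The initialization is feasible, either by \eqref{eq:txinit} or by its fallback, so feasibility is maintained by induction.

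\emph{Termination.} A move is executed only if it strictly decreases $\hat{\mathcal{J}}$, so the sequence of accepted placements has strictly decreasing objective values. In particular, no placement in $\{\text{candidates}\}^{N_{\rm T}}$ is visited twice. That set is finite, with at most $N_{\rm c}^{N_{\rm T}}$ elements, so the algorithm terminates after finitely many moves. The objective is also bounded below by zero, and by $\mathcal{J}$ through the sandwich $\mathcal{J}\le\hat{\mathcal{J}}$.

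\emph{Coordinate-wise minimum.} The main subtlety is the stopping rule, which uses a relative-decrease tolerance $\epsilon$ rather than the absence of any move. I would argue that a cycle with no accepted move yields zero decrease, so the algorithm eventually performs a cycle in which no move is accepted, and at that point every single-PA move to a candidate fails to decrease $\hat{\mathcal{J}}$. To make this rigorous for $\epsilon>0$, I would either take $\epsilon=0$ in the statement or interpret it as that exact stopping condition. This is the main obstacle; with a finite candidate set the issue only affects whether one extra cycle is run.

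\emph{Complexity.} In one cycle, each of the $N_{\rm T}$ PAs is tried at $N_{\rm c}$ candidate positions. Each candidate requires evaluating $\hat{\mathcal{J}}$, which involves $|\mathcal{C}|$ symbol vectors, each needing $M-1$ subproblems \eqref{P2}. Each subproblem is solved by $I_{\rm ADMM}$ iterations of the consensus ADMM. Each iteration consists of closed-form projections: the $2K+2$ half-space projections cost $\mathcal{O}(N_{\rm T})$ each, and the $N_{\rm T}$ disc projections cost $\mathcal{O}(1)$ each, for $\mathcal{O}(KN_{\rm T})$ per iteration. Recomputing the channels for a moved PA is dominated by this cost. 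Multiplying these factors gives \eqref{eq:complexity}.
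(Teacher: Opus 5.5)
Your proposal is correct and follows the paper's proof essentially step by step. Feasibility comes from the finiteness of $\hat{\mathcal{J}}$ (some subproblem \eqref{P2} is feasible for every symbol vector, and $\mathrm{C}_4$ holds because candidates lie in $[0,D_x]$), finite termination from strict decrease over a finite set of placements, and the complexity from the product of $N_{\rm T}N_{\rm c}|\mathcal{C}|(M-1)I_{\rm ADMM}$ with the $\mathcal{O}(KN_{\rm T})$ cost of the half-space projections per ADMM iteration. Your extension from $\mathcal{C}$ to all $M^K$ symbol vectors by a common rotation is a detail the paper's proof leaves implicit. So is your observation that the coordinate-wise-minimum conclusion is exact only for $\epsilon=0$; with $\epsilon>0$, however, the algorithm may stop several cycles before such a point, not merely one.
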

\begin{IEEEproof}
A candidate position is accepted only if $\hat{\mathcal{J}}$ is finite and decreases. By \eqref{eq:J}, a finite $\hat{\mathcal{J}}$ means that, for every symbol vector, at least one of the $M-1$ subproblems \eqref{P2} is feasible, so that $\mathrm{C}_1$--$\mathrm{C}_3$ hold, whereas $\mathrm{C}_4$ holds since every candidate lies in $[0,D_x]$. Since $\hat{\mathcal{J}}$ decreases strictly over a finite candidate set, no placement is visited twice and the algorithm terminates. Finally, each cycle tests $N_{\rm c}$ candidates for each of the $N_{\rm T}$ PAs, each test solves \eqref{P2} for the $|\mathcal{C}|$ symbol classes and the $M-1$ wrong symbols, and each ADMM iteration costs $\mathcal{O}(KN_{\rm T})$ operations for the projections onto the $2K+2$ half-spaces, which gives \eqref{eq:complexity}.
\end{IEEEproof}

\section{Numerical Results}\label{sec:results}
This section evaluates the proposed design through Monte Carlo simulations. Unless otherwise stated, the carrier frequency is $f_c=28$ GHz, the effective refractive index is $n_{\rm eff}=1.4$, the in-waveguide attenuation is $\alpha=0.05$ Np/m, and $N_{\rm T}=N_{\rm R}=4$ waveguides are mounted at height $d=3$ m over a $10\times10$ m$^2$ service area \cite{Xu2025rate,Bozanis2025CRB}. The BS serves $K=2$ Bobs with QPSK and a transmit power of $P=30$ dBm, $\sigma_k^2=\sigma_e^2=\sigma_{\rm R}^2=-90$ dBm, $\Gamma_k=20$ dB, $\Gamma_e=0$ dB, and $\sigma_{\rm RCS}=1$ m$^2$. The results are obtained over $\mcNdrop$ random deployments, with the Bobs uniformly distributed over the service area and the target/Eve located within $3$ m of the protected Bob. A deployment is considered feasible if, for every symbol vector, at least one of the $M-1$ subproblems \eqref{P2} admits a solution, and the sensing performance is measured by the average PEB (APEB), i.e., the PEB averaged over the symbol vectors and then over the feasible deployments, shown only where at least half of the deployments are feasible.

The benchmarks are also optimized, employing the SLP design of Section~\ref{sec:design} with their own channels, so that the comparison isolates the gain of position reconfigurability. They include an unoptimized PASS, whose PAs are fixed at the waveguide midpoints to showcase the gain of the placement, and two conventional-array configurations placed at the center of the ceiling. The first employs a $4$-element $\lambda/2$-spaced ULA at both the transmit and receive sides. The second employs fully digital XL-MIMO arrays, each comprising $256$ $\lambda/2$-spaced elements and RF chains \cite{Cui2022}, i.e., $256$ RF chains per side against only $4$ for the PASS. All schemes use the same total transmit power and spherical-wave channel model, and the PEB of each scheme is evaluated under optimal combining. Algorithm~\ref{alg:ao} uses a receive grid of $0.05$ m, transmit candidates on a $0.5$ m grid and on a $\lambda/20$ grid within $\pm2\lambda$ of the current position, and $\epsilon=10^{-3}$.

Fig.~\ref{fig:cidi}(b) first verifies the CI/DI design, with $20$ noisy received symbols per receiver and deployment. The observations of the Bobs lie inside the CI region, whose margin is $\sqrt{\Gamma_k\sigma_k^2}=10\,\sigma_k$, whereas those of the target/Eve lie deep inside the wrong-symbol regions, with a median magnitude of $36\sqrt{\Gamma_k\sigma_k^2}$. Hence, the target/Eve decides a wrong symbol in every sample, despite the strong illumination required for sensing.

\begin{figure*}[t]\centering
\input{figures/figG.tex}
\caption{(a) APEB and (b) feasibility probability versus $\Gamma_k$, and (c) CDF of the PEB and RMSE.}\label{fig:bench}
\end{figure*}
Figs.~\ref{fig:bench}(a) and \ref{fig:bench}(b) compare the proposed design with the benchmarks in terms of APEB and feasibility. The compact ULA is practically unusable, with an APEB of about $300$ m, in agreement with Proposition~\ref{prop:compact}. The XL-MIMO array achieves an APEB of $1.17$ mm, while the optimized PASS reaches $0.11$ mm. This is $29$ times better than the unoptimized PASS and one order of magnitude better than XL-MIMO, with $64$ times fewer RF chains. PA placement also expands the feasible operating region. The optimized PASS is feasible in each deployment up to $40$ dB and in $90\%$ of the deployments at $45$ dB, compared with $75\%$ and $8\%$ for the unoptimized PASS. On average, the highest tested feasible $\Gamma_k$ of the optimized PASS is more than $5$ dB above that of the unoptimized PASS. The compact ULA is already infeasible in $12\%$ of the deployments at $30$ dB. The XL-MIMO array remains feasible in almost every deployment, owing to the beamforming gain of its $256$ transmit elements. The ULA, more than six orders of magnitude worse, is omitted from the remaining figures.

Fig.~\ref{fig:bench}(c) shows the distribution behind the APEB of Fig.~\ref{fig:bench}(a), together with the root-mean-square error (RMSE) $\sqrt{\mathbb{E}[\|\hat{\bpsi}_t-\bpsi_t\|^2]}$ of the maximum-likelihood estimate $\hat{\bpsi}_t$, i.e., the maximizer of $|\ba^H(\bpsi)\mathbf{y}_{\rm R}|^2/\|\ba(\bpsi)\|^2$, where the expectation is over the noise and the symbol vectors of each deployment. The illumination cancels in this ratio, so that the position is recovered from the receive geometry alone and the transmitter only sets the SNR, as Lemma~\ref{th:factor} and Remark~\ref{rem:txaperture} predict. The PEB of the optimized PASS lies between $0.073$ mm and $0.18$ mm in $80\%$ of the deployments and never exceeds $0.28$ mm, whereas the benchmarks are heavy-tailed, with an APEB three to four times their median. The RMSE follows the PEB within $10\%$ in each deployment of the optimized PASS and within $20\%$ for the benchmarks, except for one deployment of each benchmark with a PEB of several centimeters, where the RMSE falls below this local bound. Hence, the bound faithfully predicts the achievable accuracy.

\begin{figure*}[t]\centering
\input{figures/figH.tex}
\caption{APEB versus (a) $N_{\rm R}$ and (b) $N_{\rm T}$.}\label{fig:wg}
\end{figure*}
Fig.~\ref{fig:wg} illustrates the distinct roles of the two apertures predicted by Lemma~\ref{th:factor}. Fig.~\ref{fig:wg}(a) examines the receive side. With $N_{\rm R}=2$, the phase term of $\mathbf{G}$ is rank deficient according to Proposition~\ref{prop:ident}, and position information is provided only by the amplitude variations discussed in Section~\ref{sec:geom}. Consequently, the APEB is $0.86$ m. Adding a third receive waveguide reduces it to $0.17$ mm, while further increasing $N_{\rm R}$ provides a more gradual improvement, reaching $0.075$ mm at $N_{\rm R}=8$. Since the transmit waveguides affect the PEB only through illumination, Fig.~\ref{fig:wg}(b) shows that increasing $N_{\rm T}$ from $2$ to $8$ reduces the APEB smoothly by a factor of two. In both cases, the unoptimized PASS is more than one order of magnitude worse. Moreover, the receive initialization of Corollary~\ref{lem:stagger} and the transmit initialization in \eqref{eq:txinit} remain within about $20\%$ of the respective converged solutions for $N_{\rm R}\ge3$ and $N_{\rm T}\ge3$, which supports the analytical placement rules.

\begin{figure*}[t]\centering
{\setlength{\figw}{0.95\figwtwo}\setlength{\figh}{\fightwo}\input{figures/figD.tex}\hspace{0.03\textwidth}\input{figures/figJ.tex}}
\caption{PEB versus (a) $\delta/r_\perp$ and (b) $x_t$, for various attenuation coefficients.}\label{fig:att}
\end{figure*}
Fig.~\ref{fig:att} illustrates the geometric insights of Section~\ref{sec:metric} in a nominal geometry, with the Bobs located at $(6,3)$ m and $(3.5,7)$ m and the target/Eve at $(7.5,3.8)$ m. Fig.~\ref{fig:att}(a) validates Corollary~\ref{lem:stagger} for three receive waveguides at a spacing of $1.25$ m, the middle one passing above the target, with the waveguides taken long enough to accommodate every offset shown. As the receive PAs approach the points of their waveguides closest to the target, the PEB increases sharply and diverges as $\delta\to0$, being $20$ times larger than the optimum at the smallest plotted offset. The root of \eqref{eq:cubic} (dotted vertical line) coincides with the minimum of the lossless curve at $\delta\approx0.6\,r_\perp$, below the offset $\delta^\star_x$ of \eqref{eq:dstar} (dashed vertical line), and attenuation raises the curve and slightly shifts the optimum toward smaller offsets.

Fig.~\ref{fig:att}(b) shows the effect of the waveguide length and attenuation, obtained by moving the target/Eve and the protected Bob along the waveguides. Without attenuation, the PEB is flat at $0.063$ mm over the interior of the room and increases by about $4\%$ within one meter of either end, where the receive offsets of Corollary~\ref{lem:stagger} are truncated by the waveguide boundary. With attenuation, the PAs follow the target with the same relative positions, so that a target far from the feed forces both the transmit PAs that illuminate it and the receive PAs that collect its echo far along their waveguides. The in-waveguide losses are therefore paid twice, and the PEB grows with $x_t$ according to \eqref{eq:pebexp}. The unoptimized PASS is about eight times worse, and no value is shown for it at $x_t=5$ m, where all its receive PAs lie at $x_t$ and, by Proposition~\ref{prop:ident}, $x_t$ is unobservable, so that its PEB diverges as the target approaches the waveguide midpoint.

\begin{figure*}[t]\centering
\input{figures/figMC2.tex}
\caption{(a) APEB and (b) feasibility probability versus $\Gamma_k$, for various $K$.}\label{fig:mc2}
\end{figure*}
Fig.~\ref{fig:mc2} shows the communication--sensing trade-off for $K\in\{1,2,3\}$. With a single Bob, the design is feasible in every deployment for every threshold and the APEB is flat at $0.11$ mm, since one CI constraint and the DI constraint leave enough freedom to the transmit waveguides. With two and three Bobs, the optimized PASS is feasible in every deployment up to $40$ dB, and in $90\%$ of them at $45$ dB for $K=2$, whereas the unoptimized PASS loses feasibility more than $5$ dB earlier. Since $40$ dB lies far above the SNR required for reliable QPSK detection, the proposed design serves even three Bobs with a very demanding QoS, while its APEB remains below $0.17$ mm.

\section{Conclusion}\label{sec:conclusion}
In this paper, we studied secure near-field ISAC with PASS, where SLP guarantees CI for the Bobs and DI at the target/Eve. We showed that the Cartesian PEB factorizes into a transmit-side illumination term and a receive-side geometric term, which separates the joint design exactly into a receive-placement, a transmit-placement, and a per-symbol SLP problem, the latter handled through convex subproblems, and further yields closed-form placement rules for the receive pinching antennas. Numerical results confirmed sub-millimeter localization, erroneous decisions of the target/Eve in every simulated sample, and a wider range of feasible QoS thresholds than with fixed PAs. The value of PASS here lies in its high reconfigurability, which shapes the receive geometry so that this accuracy is achieved with only a few RF chains. Even a fully digital XL-MIMO array with $256$ RF chains per side remained one order of magnitude less accurate. Multiple pinching antennas per waveguide, imperfect knowledge of the target position, and placement driven by global bounds that account for ambiguities are left for future work.

\appendices
\section{Proof of Lemma~\ref{th:factor}}\label{app:factor}
Let $\mathbf{J}$ denote the FIM of $\boldsymbol\xi$. The PEB is the CRB on the root-mean-square error of the position estimate, i.e.,
\begin{equation}\label{eq:PEB}
\PEB=\sqrt{\tr\left(\left[\mathbf{J}^{-1}\right]_{1:2,1:2}\right)}.
\end{equation}
The observation \eqref{eq:echo} is complex Gaussian with covariance $\sigma_{\rm R}^2\mathbf{I}$ and mean $\boldsymbol\mu=\alpha_t\ba_t\bh_e^H\bx$, so that
\begin{equation}\label{eq:FIM}
\mathbf{J}=\frac{2}{\sigma_{\rm R}^2}\,\Re\left\{\dot{\boldsymbol\mu}^H\dot{\boldsymbol\mu}\right\},\qquad \dot{\boldsymbol\mu}=\frac{\partial\boldsymbol\mu}{\partial\boldsymbol\xi^{\top}}.
\end{equation}
Write $\boldsymbol\mu=\alpha_t\mathbf{v}(\bpsi_t)$ with $\mathbf{v}(\bpsi_t)=p\,\ba_t$ and $p=\bh_e^H\bx$, and let $\dot{\mathbf{V}}=\partial\mathbf{v}/\partial\bpsi_t^{\top}\in\mathbb{C}^{N_{\rm R}\times2}$. Since $\partial\boldsymbol\mu/\partial\bpsi_t^{\top}=\alpha_t\dot{\mathbf{V}}$, $\partial\boldsymbol\mu/\partial\Re\{\alpha_t\}=\mathbf{v}$, and $\partial\boldsymbol\mu/\partial\Im\{\alpha_t\}=j\mathbf{v}$, \eqref{eq:FIM} has the block form
\begin{equation}\label{eq:FIMblocks}
\mathbf{J}=\frac{2}{\sigma_{\rm R}^2}\begin{bmatrix}\mathbf{B} & \Re\{\mathbf{z}\} & -\Im\{\mathbf{z}\}\\ \Re\{\mathbf{z}\}^{\top} & \|\mathbf{v}\|^2 & 0\\ -\Im\{\mathbf{z}\}^{\top} & 0 & \|\mathbf{v}\|^2\end{bmatrix},
\end{equation}
where $\mathbf{B}=|\alpha_t|^2\Re\{\dot{\mathbf{V}}^H\dot{\mathbf{V}}\}$ and $\mathbf{z}=\alpha_t^*\dot{\mathbf{V}}^H\mathbf{v}\in\mathbb{C}^{2}$. By the block matrix inversion formula, $[\mathbf{J}^{-1}]_{1:2,1:2}=\mathbf{J}_{\rm e}^{-1}$, where $\mathbf{J}_{\rm e}$ is the Schur complement of the reflectivity block of \eqref{eq:FIMblocks}, which, since $\Re\{\mathbf{z}\}\Re\{\mathbf{z}\}^{\top}+\Im\{\mathbf{z}\}\Im\{\mathbf{z}\}^{\top}=\Re\{\mathbf{z}\mathbf{z}^H\}=|\alpha_t|^2\Re\{\dot{\mathbf{V}}^H\mathbf{v}\mathbf{v}^H\dot{\mathbf{V}}\}$, is given by
\begin{equation}\label{eq:schur}
\mathbf{J}_{\rm e}=\frac{2|\alpha_t|^2}{\sigma_{\rm R}^2}\,\Re\left\{\dot{\mathbf{V}}^H\boldsymbol\Pi^\perp_{\mathbf{v}}\dot{\mathbf{V}}\right\},\qquad
\boldsymbol\Pi^\perp_{\mathbf{v}}=\mathbf{I}-\frac{\mathbf{v}\mathbf{v}^H}{\|\mathbf{v}\|^2}.
\end{equation}
Since $p$ depends on $\bpsi_t$ through $\bh_e$, the derivative of $\mathbf{v}(\bpsi_t)$ is
\begin{equation}\label{eq:Vdot}
\dot{\mathbf{V}}=\ba_t\,\dot p+p\,\dot{\mathbf{A}},\qquad \dot{\mathbf{A}}=\frac{\partial\ba_t}{\partial\bpsi_t^{\top}}.
\end{equation}
Since $\mathbf{v}=p\,\ba_t$, the projector satisfies $\boldsymbol\Pi^\perp_{\mathbf{v}}=\boldsymbol\Pi^\perp_{\ba_t}$ and $\boldsymbol\Pi^\perp_{\ba_t}\ba_t=\mathbf{0}$, so that the transmit-side term $\ba_t\dot p$ in \eqref{eq:Vdot} vanishes and
\begin{equation}\label{eq:PiV}
\boldsymbol\Pi^\perp_{\mathbf{v}}\dot{\mathbf{V}}=p\,\boldsymbol\Pi^\perp_{\ba_t}\dot{\mathbf{A}}.
\end{equation}
Writing $\dot{\mathbf{A}}=\diag(\ba_t)\mathbf{D}$ with $\mathbf{D}$ as in \eqref{eq:D}, and using $\diag(\ba_t)^H\diag(\ba_t)=\diag(\mathbf{w})$, $\diag(\ba_t)^H\ba_t=\mathbf{w}$, and $\|\ba_t\|^2=\mathbf{1}^{\top}\mathbf{w}$, the projector satisfies
\begin{equation}\label{eq:Widentity}
\diag(\ba_t)^H\,\boldsymbol\Pi^\perp_{\ba_t}\,\diag(\ba_t)=\diag(\mathbf{w})-\frac{\mathbf{w}\mathbf{w}^{\top}}{\mathbf{1}^{\top}\mathbf{w}}=\mathbf{W},
\end{equation}
which is real. The substitution of \eqref{eq:PiV} into \eqref{eq:schur} then yields \eqref{eq:Je} with $\mathbf{G}$ as in \eqref{eq:G}. Finally, substituting $[\mathbf{J}^{-1}]_{1:2,1:2}=\mathbf{J}_{\rm e}^{-1}$ into \eqref{eq:PEB} gives \eqref{eq:PEBfactor}, which completes the proof.

\section{Proof of Lemma~\ref{lem:gdop}}\label{app:gdop}
From \eqref{eq:pa_channel}, the logarithmic derivative of the $m$-th receive channel with respect to the target coordinates is
\begin{equation}\label{eq:logder}
\frac{\partial\ln g(\bpsi_t;\tilde\bpsi^{\rm R}_m)}{\partial x_t}=\left(j\kappa+\frac{1}{r_m}\right)\frac{\tilde x^{\rm R}_m-x_t}{r_m},
\end{equation}
and similarly for $y_t$, so that the $m$-th row of $\mathbf{D}$ in \eqref{eq:D} is $(j\kappa+1/r_m)\,\mathbf{u}_m^{\top}$, with $\mathbf{u}_m$ as in Lemma~\ref{lem:gdop}. Hence,
\begin{equation}\label{eq:Ddecomp}
\mathbf{D}=j\kappa\,\mathbf{U}+\mathbf{E},\qquad \mathbf{U}=\left[\mathbf{u}_1,\dots,\mathbf{u}_{N_{\rm R}}\right]^{\top},
\end{equation}
where $\mathbf{E}=\diag(r_1^{-1},\dots,r_{N_{\rm R}}^{-1})\,\mathbf{U}$, and both $\mathbf{U}$ and $\mathbf{E}$ are real. Since $\mathbf{W}$ is also real, the cross terms of $\mathbf{D}^H\mathbf{W}\mathbf{D}$ are purely imaginary, so that
\begin{equation}\label{eq:ReDWD}
\Re\left\{\mathbf{D}^H\mathbf{W}\mathbf{D}\right\}=\kappa^2\,\mathbf{U}^{\top}\mathbf{W}\mathbf{U}+\mathbf{E}^{\top}\mathbf{W}\mathbf{E},
\end{equation}
where the second term is of relative order $(\kappa r_m)^{-2}$ with respect to the first. Moreover, since $\mathbf{U}^{\top}\diag(\mathbf{w})\mathbf{U}=\sum_{m}w_m\mathbf{u}_m\mathbf{u}_m^{\top}$ and $\mathbf{U}^{\top}\mathbf{w}=(\mathbf{1}^{\top}\mathbf{w})\,\bar{\mathbf{u}}$, substituting \eqref{eq:W} yields
\begin{equation}\label{eq:UWU}
\mathbf{U}^{\top}\mathbf{W}\mathbf{U}=\sum_{m=1}^{N_{\rm R}}w_m\left(\mathbf{u}_m-\bar{\mathbf{u}}\right)\left(\mathbf{u}_m-\bar{\mathbf{u}}\right)^{\top},
\end{equation}
which completes the proof.


\section{Proof of Corollary~\ref{lem:stagger}}\label{app:stagger}
With $r^2=\delta^2+r_\perp^2$, the weights and the bearing vectors of the three receive PAs are
\begin{align}
w_1&=w_3=\frac{\eta}{r^2},\qquad w_2=\frac{\eta}{d^2},\label{eq:symweights}\\
\mathbf{u}_1&=\frac{[\delta,-a]^{\top}}{r},\qquad \mathbf{u}_2=\mathbf{0},\qquad \mathbf{u}_3=\frac{[\delta,a]^{\top}}{r}.\label{eq:symbearings}
\end{align}
The weighted covariance \eqref{eq:gdop} is then diagonal, with
\begin{equation}\label{eq:Gsym}
[\mathbf{G}]_{x_tx_t}=\frac{2\kappa^2w_1w_2\,\delta^2}{(2w_1+w_2)\,r^2},\qquad [\mathbf{G}]_{y_ty_t}=\frac{2\kappa^2w_1a^2}{r^2}.
\end{equation}
Substituting the weights \eqref{eq:symweights} into \eqref{eq:Gsym} gives \eqref{eq:Gxx}, and setting its derivative with respect to $t=\delta^2$ to zero yields \eqref{eq:dstar}. Moreover, with $\rho=r_\perp^2$, the squared PEB is
\ifonecol
\begin{equation}\label{eq:PEB2sym}
\PEB^2\propto\frac{1}{[\mathbf{G}]_{x_tx_t}}+\frac{1}{[\mathbf{G}]_{y_ty_t}}=\frac{(t+\rho)(t+\rho+2d^2)}{2\eta\kappa^2\,t}+\frac{(t+\rho)^2}{2\eta\kappa^2a^2},
\end{equation}
\else
\begin{align}
\PEB^2&\propto\frac{1}{[\mathbf{G}]_{x_tx_t}}+\frac{1}{[\mathbf{G}]_{y_ty_t}}\nonumber\\
&=\frac{(t+\rho)(t+\rho+2d^2)}{2\eta\kappa^2\,t}+\frac{(t+\rho)^2}{2\eta\kappa^2a^2},\label{eq:PEB2sym}
\end{align}
\fi
whose derivative with respect to $t$ equals $q(t)/(2\eta\kappa^2t^2)$, where $q(t)$ denotes the polynomial on the left-hand side of \eqref{eq:cubic}. Since $q(t)$ is strictly increasing for $t>0$, with $q(0)=-\rho(\rho+2d^2)<0$ and $q(t)\to\infty$ as $t\to\infty$, it has exactly one positive root $t^\star$. Moreover, $q(t)<0$ for $0<t<t^\star$ and $q(t)>0$ for $t>t^\star$, so that the PEB is decreasing on $(0,t^\star)$ and increasing on $(t^\star,\infty)$. Hence, $t^\star$ is the unique minimizer. Finally, $q(\rho)=2\rho\left(2\rho^2/a^2-d^2\right)>0$, since $\rho\ge a^2$ and $\rho>d^2$ imply $\rho^2/a^2>d^2$, so that $t^\star<\rho$, i.e., $\delta^\star<r_\perp$, which completes the proof.


\section{Proof of Lemma~\ref{lem:kkt}}\label{app:kkt}
Problem \eqref{P2} is convex, with a linear objective, affine half-space constraints, and convex disc constraints, so that, under the Slater condition assumed in Lemma~\ref{lem:kkt}, the Karush--Kuhn--Tucker (KKT) conditions are necessary and sufficient for optimality \cite{Boyd2004}. Let $\nu_n\ge0$ denote the Lagrange multiplier of the $n$-th power constraint and $\boldsymbol\nu=[\nu_1,\dots,\nu_{N_{\rm T}}]^{\top}$. Writing \eqref{P2} as the minimization of $-\Re\{\mathbf{c}^H\bx\}$, the Lagrangian is
\ifonecol
\begin{align}
\mathcal{L}&=-\Re\{\mathbf{c}^H\bx\}+\sum_{i=1}^{2K+2}\mu_i\left(b_i-\Re\{\bar{\ba}_i^H\bx\}\right)+\sum_{n=1}^{N_{\rm T}}\nu_n\left(|x_n|^2-\frac{P}{N_{\rm T}}\right)\nonumber\\
&=\sum_{i=1}^{2K+2}\mu_ib_i-\sum_{n=1}^{N_{\rm T}}\left(\Re\{\zeta_n^*x_n\}-\nu_n|x_n|^2\right)-\frac{P}{N_{\rm T}}\sum_{n=1}^{N_{\rm T}}\nu_n,\label{eq:lagr}
\end{align}
\else
\begin{align}
\mathcal{L}&=-\Re\{\mathbf{c}^H\bx\}+\sum_{i=1}^{2K+2}\mu_i\left(b_i-\Re\{\bar{\ba}_i^H\bx\}\right)\nonumber\\
&\quad+\sum_{n=1}^{N_{\rm T}}\nu_n\left(|x_n|^2-\frac{P}{N_{\rm T}}\right)\nonumber\\
&=\sum_{i=1}^{2K+2}\mu_ib_i-\sum_{n=1}^{N_{\rm T}}\left(\Re\{\zeta_n^*x_n\}-\nu_n|x_n|^2\right)\nonumber\\
&\quad-\frac{P}{N_{\rm T}}\sum_{n=1}^{N_{\rm T}}\nu_n,\label{eq:lagr}
\end{align}
\fi
where the second equality follows from \eqref{eq:zeta}. Since \eqref{eq:lagr} is separable in the entries of $\bx$, the stationarity condition $\partial\mathcal{L}/\partial x_n^*=0$ yields
\begin{equation}\label{eq:stationarity}
2\nu_nx_n=\zeta_n,\qquad \forall n\in\mathcal{N}_{\rm T},
\end{equation}
whereas the complementary slackness conditions are
\begin{equation}\label{eq:cs}
\mu_i\left(b_i-\Re\{\bar{\ba}_i^H\bx\}\right)=0,\qquad \nu_n\left(|x_n|^2-\frac{P}{N_{\rm T}}\right)=0,
\end{equation}
i.e., a multiplier is positive only if the corresponding constraint holds with equality.

In the first case, $\nu_n>0$, so that the $n$-th power constraint holds with equality by \eqref{eq:cs}, and \eqref{eq:stationarity} gives
\begin{equation}\label{eq:xn}
x_n=\frac{\zeta_n}{2\nu_n}.
\end{equation}
Taking the modulus of \eqref{eq:xn} and using the equality in the power constraint yields
\begin{equation}\label{eq:nustar}
\nu_n=\frac{|\zeta_n|}{2\sqrt{P/N_{\rm T}}},
\end{equation}
and the substitution of \eqref{eq:nustar} into \eqref{eq:xn} gives \eqref{eq:kkt}. Conversely, $\zeta_n\ne0$ implies $\nu_n>0$ by \eqref{eq:stationarity}, so that \eqref{eq:kkt} holds for every waveguide with $\zeta_n\ne0$.

In the second case, $\nu_n=0$ and \eqref{eq:stationarity} reduces to $\zeta_n=0$, so that $x_n$ is fixed only by the feasibility of \eqref{P2} and the equalities in \eqref{eq:cs}.

It remains to show that $\boldsymbol\mu^\star$ solves \eqref{eq:dual}. By \eqref{eq:lagr}, for $\nu_n>0$ the infimum of $\mathcal{L}$ over $x_n$ is attained at \eqref{eq:xn} and equals $-|\zeta_n|^2/(4\nu_n)-\nu_nP/N_{\rm T}$, whereas for $\nu_n=0$ it equals $0$ if $\zeta_n=0$ and $-\infty$ otherwise. Hence, the dual function is
\begin{equation}\label{eq:dualfun}
g(\boldsymbol\mu,\boldsymbol\nu)=\sum_{i=1}^{2K+2}\mu_ib_i-\sum_{n=1}^{N_{\rm T}}\left(\frac{|\zeta_n|^2}{4\nu_n}+\nu_n\frac{P}{N_{\rm T}}\right)
\end{equation}
on its domain, and the dual problem is the maximization of \eqref{eq:dualfun} over $\boldsymbol\mu\ge\mathbf{0}$ and $\boldsymbol\nu\ge\mathbf{0}$. The maximization over $\boldsymbol\nu$ is carried out in closed form, since the $n$-th term of the second sum in \eqref{eq:dualfun} is minimized at the value \eqref{eq:nustar} of $\nu_n$, with minimum $\sqrt{P/N_{\rm T}}\,|\zeta_n|$, which also covers the case $\zeta_n=0$. Substituting into \eqref{eq:dualfun} and reversing the sign turns the maximization over $\boldsymbol\mu$ into the minimization \eqref{eq:dual}, whose objective is convex, as the sum of the moduli of affine functions of $\boldsymbol\mu$ and of a linear term. Finally, since \eqref{P2} is convex, strong duality holds under Slater's condition \cite{Boyd2004}, so that the optimal value of \eqref{eq:dual} equals that of \eqref{P2}, and $\bx^\star$ is the point that satisfies the KKT conditions with $\boldsymbol\mu^\star$, which completes the proof.

\bibliographystyle{IEEEtran}
\bibliography{Bibliography}
\end{document}

%% file: figures/data/values.tex
\def\fbAa{45}
\def\fbAb{45}
\def\fbAc{37.5}
\def\fbBa{35}
\def\fbBb{40}
\def\fbBc{45}

\def\dstarJ{1.2824}
\def\dstarPEB{0.6099}
\def\mcNdrop{100}

%% file: figures/data/values_mc.tex
\def\mcNdrop{100}
\def\ciMargin{10}
\def\diMargin{1}
\def\scMaxAbs{430.34}

%% file: figures/figCIDI.tex
\begin{tikzpicture}[scale={0.0041*\figwtwo}, font=\footnotesize, baseline=(current bounding box.center)]
\node[above] at (0,3.4) {(a)};
\clip (-3.8,-3.4) rectangle (3.8,3.4);
\colorlet{ciFill}{blue!15}
\colorlet{diFill}{red!15}
\fill[diFill] (-0.8,0) -- (-4.2,3.4) -- (-4.2,-3.4) -- cycle;   
\fill[diFill] (0,0.8) -- (2.6,3.4) -- (-2.6,3.4) -- cycle;      
\fill[diFill] (0,-0.8) -- (2.6,-3.4) -- (-2.6,-3.4) -- cycle;   
\fill[ciFill] (1.6,0) -- (4.2,2.6) -- (4.2,-2.6) -- cycle;
\draw[gray!70, dotted, thick] (-3.4,-3.4) -- (3.4,3.4);
\draw[gray!70, dotted, thick] (-3.4,3.4) -- (3.4,-3.4);
\draw[red!60!black, densely dashed, thick] (-4.2,3.4) -- (-0.8,0) -- (-4.2,-3.4);
\draw[red!60!black, densely dashed, thick] (2.6,3.4) -- (0,0.8) -- (-2.6,3.4);
\draw[red!60!black, densely dashed, thick] (2.6,-3.4) -- (0,-0.8) -- (-2.6,-3.4);
\draw[blue!60!black, densely dashed, thick] (4.2,2.6) -- (1.6,0) -- (4.2,-2.6);
\draw[->, thick] (-3.8,0) -- (3.7,0) node[above left] {Re};
\draw[->, thick] (0,-3.4) -- (0,3.3) node[below right] {Im};
\draw[blue!60!black] (2.3,0) arc (0:45:0.7);
\node[blue!60!black] at (2.5,0.36) {$\Phi$};
\fill (0,0) circle (1.5pt) node[below left, inner sep=2pt] {$0$};
\fill (1.6,0) circle (1.5pt);
\node[anchor=north east, inner sep=1pt] at (1.75,-0.12) {$\sqrt{\Gamma_k\sigma_k^2}$};
\fill (-0.8,0) circle (1.5pt);
\fill (0,0.8) circle (1.5pt);
\fill (0,-0.8) circle (1.5pt);
\node[anchor=south east, inner sep=1pt, font=\scriptsize] at (-1.25,0.08) {$\sqrt{\Gamma_e\sigma_e^2}$};
\node[blue!40!black, anchor=east] at (3.75,1.0) {CI region};
\node[red!50!black] at (-2.7,-1.1) {$\mathcal{D}(-s_1)$};
\node[red!50!black] at (0.85,2.65) {$\mathcal{D}(s_1e^{j\pi/2})$};
\node[red!50!black] at (-0.85,-2.7) {$\mathcal{D}(s_1e^{-j\pi/2})$};
\fill[blue!70!black] (3.0,-0.7) circle (1.8pt) node[right, inner sep=2pt] {$\tilde y_k$};
\fill[red!70!black] (-3.2,1.0) circle (1.8pt) node[above, inner sep=2pt] {$\tilde y_e$};
\end{tikzpicture}

%% file: figures/figC.tex
\pgfmathsetmacro{\scL}{1.08*\scMaxAbs}
\pgfmathsetmacro{\zmL}{2.5*\ciMargin}
\newcommand{\cidiregions}[1]{%
  \addplot[fill=red!15, draw=red!60!black, densely dashed, thick, forget plot] coordinates
    {(-\diMargin,0) (-#1,#1-\diMargin) (-#1,-#1+\diMargin)} -- cycle;
  \addplot[fill=red!15, draw=red!60!black, densely dashed, thick, forget plot] coordinates
    {(0,\diMargin) (#1-\diMargin,#1) (-#1+\diMargin,#1)} -- cycle;
  \addplot[fill=red!15, draw=red!60!black, densely dashed, thick, forget plot] coordinates
    {(0,-\diMargin) (#1-\diMargin,-#1) (-#1+\diMargin,-#1)} -- cycle;
  \addplot[fill=blue!15, draw=blue!60!black, densely dashed, thick, forget plot] coordinates
    {(\ciMargin,0) (#1,#1-\ciMargin) (#1,-#1+\ciMargin)} -- cycle;
  \addplot[gray!70, dotted, thick, no markers, forget plot] coordinates {(-#1,-#1) (#1,#1)};
  \addplot[gray!70, dotted, thick, no markers, forget plot] coordinates {(-#1,#1) (#1,-#1)};}
\begin{tikzpicture}[baseline=(current bounding box.center)]
\begin{axis}[cidi, name=main, axis equal image, xlabel={In-phase (rotated)}, ylabel={Quadrature (rotated)},
  xmin=-\scL, xmax=\scL, ymin=-\scL, ymax=\scL, width=\figwtwo, height=\figwtwo, title={(b)},
  grid=none, legend image post style={mark size=1.6pt},
  legend style={at={(0,0)}, anchor=south west, fill=white, fill opacity=0.85, draw opacity=1, text opacity=1}]
\cidiregions{\scL}
\addplot[only marks, mark=*, mark size=0.6pt, blue] table[x=I, y=Q] {\datapath figC_u1.dat}; \addlegendentry{Bob $1$}
\addplot[only marks, mark=*, mark size=0.6pt, black] table[x=I, y=Q] {\datapath figC_u2.dat}; \addlegendentry{Bob $2$}
\addplot[only marks, mark=*, mark size=0.6pt, red] table[x=I, y=Q] {\datapath figC_eve.dat}; \addlegendentry{target/Eve}
\node[blue!40!black, font=\footnotesize] at (axis cs:0.78*\scL,0.5*\scL) {CI region};
\node[red!50!black, font=\footnotesize] at (axis cs:-0.7*\scL,-0.4*\scL) {$\mathcal{D}(-s_1)$};
\node[red!50!black, font=\footnotesize] at (axis cs:0.45*\scL,0.82*\scL) {$\mathcal{D}(s_1e^{j\pi/2})$};
\draw[black, line width=0.5pt] (axis cs:-\zmL,-\zmL) rectangle (axis cs:\zmL,\zmL);
\end{axis}
\begin{axis}[at={(main.north west)}, anchor=north west, scale only axis,
  width=0.3\figwtwo, height=0.3\figwtwo, xmin=-\zmL, xmax=\zmL, ymin=-\zmL, ymax=\zmL,
  xtick=\empty, ytick=\empty,
  axis background/.style={fill=white}, enlarge x limits=false]
\cidiregions{\zmL}
\addplot[only marks, mark=*, mark size=0.6pt, blue, forget plot] table[x=I, y=Q] {\datapath figC_u1.dat};
\addplot[only marks, mark=*, mark size=0.6pt, black, forget plot] table[x=I, y=Q] {\datapath figC_u2.dat};
\addplot[only marks, mark=*, mark size=0.6pt, red, forget plot] table[x=I, y=Q] {\datapath figC_eve.dat};
\end{axis}
\end{tikzpicture}

%% file: figures/figG.tex
\begin{tikzpicture}
\begin{groupplot}[group style={group size=3 by 1, horizontal sep=1.5cm}, cidi, width=0.85\figwthree, height=\fighthree]
\nextgroupplot[ymode=log, ylabel={APEB (mm)}, title={(a)}, xlabel={$\Gamma_k$ (dB)}, xmin=10, xmax=45, xtick={10,15,...,45}]
\addplot[c1, no marks] table[x=gam, y=pass_opt] {\datapath figG.dat};
\addplot[c1, dashdotted, no marks] table[x=gam, y=pass_unopt] {\datapath figG.dat};
\addplot[c3, no marks] table[x=gam, y=xl] {\datapath figG.dat};
\addplot[c2, no marks] table[x=gam, y=ula] {\datapath figG.dat};
\nextgroupplot[ylabel={feasibility probability}, ymin=0, ymax=1.05, title={(b)}, xlabel={$\Gamma_k$ (dB)}, xmin=10, xmax=45, xtick={10,15,...,45},
  legend style={at={(0,0)}, anchor=south west, font=\tiny, row sep=-1pt}]
\addplot[c1, no marks] table[x=gam, y=p_opt] {\datapath figG.dat}; \addlegendentry{PASS, optimized}
\addplot[c1, dashdotted, no marks] table[x=gam, y=p_unopt] {\datapath figG.dat}; \addlegendentry{PASS, unoptimized}
\addplot[c3, no marks] table[x=gam, y=p_xl] {\datapath figG.dat}; \addlegendentry{XL-MIMO ($256$)}
\addplot[c2, no marks] table[x=gam, y=p_ula] {\datapath figG.dat}; \addlegendentry{ULA ($4$)}
\nextgroupplot[xmode=log, xlabel={PEB, RMSE (mm)}, ylabel={CDF}, ylabel shift=-4pt, ymin=0, ymax=1.02, title={(c)},
  legend style={at={(1,0)}, anchor=south east, font=\tiny, row sep=-2.5pt}]
\addlegendimage{black, no marks} \addlegendentry{PEB}
\addlegendimage{black, only marks, mark=o} \addlegendentry{RMSE}
\addplot[c1, no marks] table[x=peb, y=cdf] {\datapath figMSEcdf_opt.dat};
\addplot[c1, only marks, mark=o, mark repeat=10, mark phase=10] table[x=rmse, y=cdf] {\datapath figMSEcdf_opt.dat};
\addplot[c1, dashdotted, no marks] table[x=peb, y=cdf] {\datapath figMSEcdf_unopt.dat};
\addplot[c1, only marks, mark=o, mark repeat=10, mark phase=10] table[x=rmse, y=cdf] {\datapath figMSEcdf_unopt.dat};
\addplot[c3, no marks] table[x=peb, y=cdf] {\datapath figMSEcdf_xl.dat};
\addplot[c3, only marks, mark=o, mark repeat=10, mark phase=10] table[x=rmse, y=cdf] {\datapath figMSEcdf_xl.dat};
\end{groupplot}
\end{tikzpicture}

%% file: figures/figH.tex
\begin{tikzpicture}
\begin{groupplot}[group style={group size=2 by 1, horizontal sep=1.7cm}, cidi, width=0.9\figwtwo, height=\fightwo,
  ymode=log, ylabel={APEB (mm)}, ymin=0.03, ymax=1e5, xtick={2,3,...,8}]
\nextgroupplot[xlabel={$N_{\rm R}$ ($N_{\rm T}=4$)}, title={(a)}]
\addplot[c3, dashdotted] table[x=MR, y=init] {\datapath figHa.dat};
\addplot[c2, dashed] table[x=MR, y=unopt] {\datapath figHa.dat};
\addplot[blue, no markers] table[x=MR, y=opt] {\datapath figHa.dat};
\nextgroupplot[xlabel={$N_{\rm T}$ ($N_{\rm R}=4$)}, title={(b)}, legend style={at={(1,1)}, anchor=north east}]
\addlegendimage{blue, no markers, line width=1pt} \addlegendentry{optimized}
\addplot[c3, dashdotted] table[x=MT, y=init] {\datapath figHb.dat}; \addlegendentry{closed-form initialization}
\addplot[c2, dashed] table[x=MT, y=unopt] {\datapath figHb.dat}; \addlegendentry{unoptimized}
\addplot[blue, no markers, forget plot] table[x=MT, y=opt] {\datapath figHb.dat};
\end{groupplot}
\end{tikzpicture}

%% file: figures/figD.tex
\begin{tikzpicture}
\begin{semilogyaxis}[cidi, xlabel={$\delta/r_\perp$}, ylabel={PEB (mm)}, no markers, title={(a)}, ymax=60,
  legend style={at={(1,1)}, anchor=north east}]
\addplot[c1] table[x=x, y=peb0] {\datapath figD.dat}; \addlegendentry{$\alpha=0$}
\addplot[c2] table[x=x, y=peb005] {\datapath figD.dat}; \addlegendentry{$\alpha=0.05$}
\addplot[c3] table[x=x, y=peb01] {\datapath figD.dat}; \addlegendentry{$\alpha=0.1$}
\draw[black, dashed, line width=0.8pt] ({axis cs:\dstarJ,1}|-{rel axis cs:0,0}) -- ({axis cs:\dstarJ,1}|-{rel axis cs:0,1});
\draw[black, dotted, line width=1.2pt] ({axis cs:\dstarPEB,1}|-{rel axis cs:0,0}) -- ({axis cs:\dstarPEB,1}|-{rel axis cs:0,1});
\end{semilogyaxis}
\end{tikzpicture}

%% file: figures/figJ.tex
\begin{tikzpicture}
\begin{semilogyaxis}[cidi, xlabel={$x_t$ (m)}, ylabel={PEB (mm)}, xmin=1, xmax=9, xtick={1,2,...,9}, title={(b)},
  legend style={at={(0,0.4)}, anchor=west}]
\addlegendimage{black, solid, no markers}\addlegendentry{Optimized}
\addlegendimage{black, dashed, no markers}\addlegendentry{Unoptimized}
\addplot[c1, forget plot] table[x=xt, y=opt0] {\datapath figJ.dat};
\addplot[c2, forget plot] table[x=xt, y=opt005] {\datapath figJ.dat};
\addplot[c3, forget plot] table[x=xt, y=opt01] {\datapath figJ.dat};
\addplot[c1, dashed, forget plot] table[x=xt, y=un0] {\datapath figJ.dat};
\addplot[c2, dashed, forget plot] table[x=xt, y=un005] {\datapath figJ.dat};
\addplot[c3, dashed, forget plot] table[x=xt, y=un01] {\datapath figJ.dat};
\end{semilogyaxis}
\end{tikzpicture}

%% file: figures/figMC2.tex
\begin{tikzpicture}
\begin{groupplot}[group style={group size=2 by 1, horizontal sep=1.7cm}, cidi, width=0.9\figwtwo, height=\fightwo,
  xlabel={$\Gamma_k$ (dB)}, xmin=10, xmax=45, xtick={10,15,...,45}]
\nextgroupplot[ymode=log, ylabel={APEB (mm)}, title={(a)}]
\addplot[c1] table[x=gam, y=med_opt1] {\datapath figMC2.dat};
\addplot[c2] table[x=gam, y=med_opt2] {\datapath figMC2.dat};
\addplot[c3] table[x=gam, y=med_opt3] {\datapath figMC2.dat};
\addplot[c1, dashed] table[x=gam, y=med_un1] {\datapath figMC2.dat};
\addplot[c2, dashed] table[x=gam, y=med_un2] {\datapath figMC2.dat};
\addplot[c3, dashed] table[x=gam, y=med_un3] {\datapath figMC2.dat};
\nextgroupplot[ylabel={feasibility probability}, ymin=0, ymax=1.05, title={(b)},
  legend style={at={(0,0)}, anchor=south west, row sep=-2.5pt, inner sep=1pt}]
\addplot[c1] table[x=gam, y=p_opt1] {\datapath figMC2.dat}; \addlegendentry{$K=1$}
\addplot[c2] table[x=gam, y=p_opt2] {\datapath figMC2.dat}; \addlegendentry{$K=2$}
\addplot[c3] table[x=gam, y=p_opt3] {\datapath figMC2.dat}; \addlegendentry{$K=3$}
\addlegendimage{black, solid, no markers}\addlegendentry{Optimized}
\addlegendimage{black, dashed, no markers}\addlegendentry{Unoptimized}
\addplot[c1, dashed, forget plot] table[x=gam, y=p_un1] {\datapath figMC2.dat};
\addplot[c2, dashed, forget plot] table[x=gam, y=p_un2] {\datapath figMC2.dat};
\addplot[c3, dashed, forget plot] table[x=gam, y=p_un3] {\datapath figMC2.dat};
\end{groupplot}
\end{tikzpicture}

%% file: Bibliography.bib
@article{Liu2022ISAC,
  author={Liu, Fan and Cui, Yuanhao and Masouros, Christos and Xu, Jie and Han, Tony Xiao and Eldar, Yonina C. and Buzzi, Stefano},
  title={{Integrated Sensing and Communications: Toward Dual-Functional Wireless Networks for {6G} and Beyond}},
  journal={IEEE J. Sel. Areas Commun.}, volume={40}, number={6}, pages={1728--1767}, year={2022}}

@article{Liu2018MUMIMO,
  author={Liu, Fan and Masouros, Christos and Li, Ang and Sun, Huafei and Hanzo, Lajos},
  title={{{MU-MIMO} Communications With {MIMO} Radar: From Co-Existence to Joint Transmission}},
  journal={IEEE Trans. Wireless Commun.}, volume={17}, number={4}, pages={2755--2770}, year={2018}}

@article{Liu2022CRB,
  author={Liu, Fan and Liu, Ya-Feng and Li, Ang and Masouros, Christos and Eldar, Yonina C.},
  title={{Cram\'er-{R}ao Bound Optimization for Joint Radar-Communication Beamforming}},
  journal={IEEE Trans. Signal Process.}, volume={70}, pages={240--253}, year={2022}}

@article{Wei2022security,
  author={Wei, Zhongxiang and Liu, Fan and Masouros, Christos and Su, Nanchi and Petropulu, Athina P.},
  title={{Toward Multi-Functional {6G} Wireless Networks: Integrating Sensing, Communication, and Security}},
  journal={IEEE Commun. Mag.}, volume={60}, number={4}, pages={65--71}, year={2022}}

@article{Su2021malicious,
  author={Su, Nanchi and Liu, Fan and Masouros, Christos},
  title={{Secure Radar-Communication Systems With Malicious Targets: Integrating Radar, Communications and Jamming Functionalities}},
  journal={IEEE Trans. Wireless Commun.}, volume={20}, number={1}, pages={83--95}, year={2021}}

@article{Su2024sensingassisted,
  author={Su, Nanchi and Liu, Fan and Masouros, Christos},
  title={{Sensing-Assisted Eavesdropper Estimation: An {ISAC} Breakthrough in Physical Layer Security}},
  journal={IEEE Trans. Wireless Commun.}, volume={23}, number={4}, pages={3162--3174}, year={2024}}

@article{Masouros2011,
  author={Masouros, Christos},
  title={{Correlation Rotation Linear Precoding for {MIMO} Broadcast Communications}},
  journal={IEEE Trans. Signal Process.}, volume={59}, number={1}, pages={252--262}, year={2011}}

@article{Li2020tutorial,
  author={Li, Ang and Spano, Danilo and Krivochiza, Jevgenij and Domouchtsidis, Stavros and Tsinos, Christos G. and Masouros, Christos and Chatzinotas, Symeon and Li, Yonghui and Vucetic, Branka and Ottersten, Bj\"orn},
  title={{A Tutorial on Interference Exploitation via Symbol-Level Precoding: Overview, State-of-the-Art and Future Directions}},
  journal={IEEE Commun. Surveys Tuts.}, volume={22}, number={2}, pages={796--839}, year={2020}}

@article{Li2018closedform,
  author={Li, Ang and Masouros, Christos},
  title={{Interference Exploitation Precoding Made Practical: Optimal Closed-Form Solutions for {PSK} Modulations}},
  journal={IEEE Trans. Wireless Commun.}, volume={17}, number={11}, pages={7661--7676}, year={2018}}

@article{Li2021multilevel,
  author={Li, Ang and Masouros, Christos and Vucetic, Branka and Li, Yonghui and Swindlehurst, A. Lee},
  title={{Interference Exploitation Precoding for Multi-Level Modulations: Closed-Form Solutions}},
  journal={IEEE Trans. Commun.}, volume={69}, number={1}, pages={291--308}, year={2021}}

@article{Khandaker2018,
  author={Khandaker, Muhammad R. A. and Masouros, Christos and Wong, Kai-Kit},
  title={{Constructive Interference Based Secure Precoding: A New Dimension in Physical Layer Security}},
  journal={IEEE Trans. Inf. Forensics Security}, volume={13}, number={9}, pages={2256--2268}, year={2018}}

@article{Wang2025uniform,
  author={Wang, Yiran and Hu, Xiaoyan and Li, Ang and Masouros, Christos and Wong, Kai-Kit and Yang, Kun},
  title={{{ISAC} Enhancement with Interference Exploitation: From a Uniform Viewpoint for Symbol Level and Block Level}},
  journal={IEEE Trans. Veh. Technol.}, volume={75}, number={5}, pages={7848--7862}, month={May}, year={2026}}

@article{Su2022secureDFRC,
  author={Su, Nanchi and Liu, Fan and Wei, Zhongxiang and Liu, Ya-Feng and Masouros, Christos},
  title={{Secure Dual-Functional Radar-Communication Transmission: Exploiting Interference for Resilience Against Target Eavesdropping}},
  journal={IEEE Trans. Wireless Commun.}, volume={21}, number={9}, pages={7238--7252}, year={2022}}

@article{Jia2026,
  author={Jia, Fan and Li, Ang and Liao, Xuewen and Li, Yonghui and Masouros, Christos},
  title={{Secure Precoding via Interference Exploitation in Integrated Sensing and Communication System}},
  journal={IEEE Wireless Commun. Lett.}, volume={15}, pages={425--429}, year={2026}}

@article{Su2025BCRB,
  author={Su, Nanchi and Liu, Fan and Masouros, Christos and Alexandropoulos, George C. and Xiong, Yifeng and Zhang, Qinyu},
  title={{Secure {ISAC} {MIMO} Systems: Exploiting Interference with {B}ayesian {C}ram\'er--{R}ao Bound Optimization}},
  journal={EURASIP J. Wireless Commun. Netw.}, volume={2025}, number={1}, note={Art. no. 10}, month={Feb.}, year={2025}}

@article{Bozanis2026robust,
  author={Bozanis, Dimitrios and Theologis, George and Oikonomou, Thrassos K. and Tegos, Sotiris A. and Masouros, Christos and Karagiannidis, George K.},
  title={{Secure {ISAC} Beamforming Design: A {BCRB} Approach for Target-Angle Uncertainty}},
  journal={IEEE Internet Things J.}, year={2026}, note={early access}}

@article{Suzuki2022,
  author={Fukuda, A. and Yamamoto, H. and Okazaki, H. and Suzuki, Y. and Kawai, K.},
  title={{Pinching Antenna: Using a Dielectric Waveguide as an Antenna}},
  journal={NTT DOCOMO Tech. J.}, volume={23}, number={3}, pages={5--12}, year={2022}}

@article{Ding2024flexible,
  author={Ding, Zhiguo and Schober, Robert and Poor, H. Vincent},
  title={{Flexible-Antenna Systems: A Pinching-Antenna Perspective}},
  journal={IEEE Trans. Commun.}, volume={73}, number={10}, pages={9236--9253}, month={Oct.}, year={2025}}

@article{Xu2025rate,
  author={Xu, Yanqing and Ding, Zhiguo and Karagiannidis, George K.},
  title={{Rate Maximization for Downlink Pinching-Antenna Systems}},
  journal={IEEE Wireless Commun. Lett.}, volume={14}, number={5}, pages={1431--1435}, month={May}, year={2025}}

@article{Tyrovolas2025,
  author={Tyrovolas, Dimitrios and Tegos, Sotiris A. and Diamantoulakis, Panagiotis D. and Ioannidis, Sotiris and Liaskos, Christos K. and Karagiannidis, George K.},
  title={{Performance Analysis of Pinching-Antenna Systems}},
  journal={IEEE Trans. Cogn. Commun. Netw.}, volume={12}, pages={590--601}, year={2026}}

@article{Wang2025modeling,
  author={Wang, Zhaolin and Ouyang, Chongjun and Mu, Xidong and Liu, Yuanwei and Ding, Zhiguo},
  title={{Modeling and Beamforming Optimization for Pinching-Antenna Systems}},
  journal={IEEE Trans. Commun.}, volume={73}, number={12}, pages={13904--13919}, month={Dec.}, year={2025}}

@inproceedings{Bozanis2025CRB,
  author={Bozanis, Dimitrios and Papanikolaou, Vasilis K. and Tegos, Sotiris A. and Karagiannidis, George K.},
  title={{Cram\'er-{R}ao Bounds for Integrated Sensing and Communications in Pinching-Antenna Systems}},
  booktitle={Proc. IEEE Int. Symp. Pers., Indoor Mobile Radio Commun. (PIMRC)}, address={Istanbul, T\"urkiye}, pages={1--6}, month={Sep.}, year={2025}}

@article{Li2025PASSISAC,
  author={Li, Haochen and Zhong, Ruikang and Lei, Jiayi and Liu, Yuanwei},
  title={{Pinching Antenna Systems for Integrated Sensing and Communications}},
  journal={IEEE Trans. Wireless Commun.}, volume={25}, pages={13416--13429}, year={2026}}

@article{Jiang2025BCRB,
  author={Jiang, Hao and Ouyang, Chongjun and Wang, Zhaolin and Liu, Yuanwei and Nallanathan, Arumugam and Ding, Zhiguo},
  title={{Pinching-Antenna Assisted Sensing: A {B}ayesian {C}ram\'er-{R}ao Bound Perspective}},
  journal={IEEE Trans. Commun.}, volume={74}, pages={9075--9092}, year={2026}}

@misc{Jiang2026ZZB,
  author={Jiang, Hao and Ouyang, Chongjun and Liu, Yuanwei and Nallanathan, Arumugam and Schober, Robert},
  title={{Pinching Antennas-Assisted Sensing: A {Z}iv-{Z}akai Bound ({ZZB}) Perspective}},
  howpublished={arXiv:2607.00234}, year={2026}}

@misc{Zhang2026attenuation,
  author={Zhang, Xiaochen and Du, Haitao and Cheng, Yanyu and Lin, Yushen and Teh, Kah Chan},
  title={{Physical Layer Security Performance of Pinching-Antenna Systems With In-Waveguide Attenuation}},
  howpublished={arXiv:2604.15232}, year={2026}}

@misc{Chen2026unified,
  author={Chen, Yunshu and Xue, Qing and Ouyang, Chongjun and Li, Zhidu and Wang, Yi and Hua, Meng},
  title={{Pinching-Antenna-Enabled {ISAC}: A Unified Architecture for Flexible Communication and Sensing}},
  howpublished={arXiv:2609.18083}, year={2026}}

@article{Liu2025tutorial,
  author={Liu, Yuanwei and Jiang, Hao and Xu, Xiaoxia and Wang, Zhaolin and Guo, Jia and Ouyang, Chongjun and Mu, Xidong and Ding, Zhiguo and Nallanathan, Arumugam and Karagiannidis, George K. and Schober, Robert},
  title={{Pinching-Antenna Systems ({PASS}): A Tutorial}},
  journal={IEEE Trans. Commun.}, volume={74}, pages={4881--4918}, year={2026}}

@misc{Sun2025PLS,
  author={Sun, Mengxin and Ouyang, Chongjun and Wu, Sheng and Liu, Yuanwei},
  title={{Physical Layer Security for Pinching-Antenna Systems ({PASS})}},
  howpublished={arXiv:2503.09075}, year={2025}}

@misc{Papanikolaou2025AN,
  author={Papanikolaou, Pigi P. and Bozanis, Dimitrios and Tegos, Sotiris A. and Diamantoulakis, Panagiotis D. and Sarigiannidis, Panagiotis and Karagiannidis, George K.},
  title={{Physical Layer Security with Artificial Noise in {MIMO} Pinching-Antenna Systems}},
  howpublished={arXiv:2511.23079}, year={2025}}

@article{Illi2025,
  author={Illi, Elmehdi and Qaraqe, Marwa and Ghrayeb, Ali},
  title={{Secure Pinching Antenna-aided {ISAC}}},
  journal={IEEE Commun. Lett.}, volume={30}, pages={727--731}, year={2026}}

@misc{Song2026,
  author={Song, Haowen and Zhao, Jingjing and Mu, Xidong and Cai, Kaiquan},
  title={{Pinching-Antenna Systems-enabled Secure {ISAC}: A Two-Timescale Optimization Framework}},
  howpublished={arXiv:2609.02026}, year={2026}}

@misc{Pang2026,
  author={Pang, Haoran and Wen, Miaowen and Ji, Fei and Li, Jun and Tao, Yiwei},
  title={{Antenna Placement Design for Interference Exploitation in Pinching-Antenna Systems}},
  howpublished={arXiv:2603.13929}, year={2026}}

@article{Tegos2025uplink,
  author={Tegos, Sotiris A. and Diamantoulakis, Panagiotis D. and Ding, Zhiguo and Karagiannidis, George K.},
  title={{Minimum Data Rate Maximization for Uplink Pinching-Antenna Systems}},
  journal={IEEE Wireless Commun. Lett.}, volume={14}, number={5}, pages={1516--1520}, year={2025}}

@article{Feng2026phase,
  author={Feng, Hao and Bedeer, Ebrahim and Zeng, Ming and Li, Xingwang and Gong, Shimin and Pham, Quoc-Viet},
  title={{Phase-Aware Localization in Pinching Antenna Systems: {CRLB} Analysis and {ML} Estimation}},
  journal={IEEE Commun. Lett.}, volume={30}, pages={2412--2416}, year={2026}}

@misc{Papanikolaou2026privacy,
  author={Papanikolaou, Pigi P. and Bozanis, Dimitrios and Tegos, Sotiris A. and Masouros, Christos and Karagiannidis, George K.},
  title={{Secure {ISAC} with Sensing Privacy under Eavesdropper Uncertainty}},
  howpublished={arXiv:2609.25798}, year={2026}}

@article{Boyd2011ADMM,
  author  = {S. Boyd and N. Parikh and E. Chu and B. Peleato and J. Eckstein},
  title   = {{Distributed Optimization and Statistical Learning via the Alternating Direction Method of Multipliers}},
  journal = {Found. Trends Mach. Learn.},
  volume  = {3},
  number  = {1},
  pages   = {1--122},
  year    = {2011}}

@book{Boyd2004,
  author    = {S. Boyd and L. Vandenberghe},
  title     = {{Convex Optimization}},
  publisher = {Cambridge University Press},
  address   = {Cambridge, U.K.},
  year      = {2004}
}

@article{Cui2022,
  author={M. Cui and L. Dai},
  title={{Channel Estimation for Extremely Large-Scale {MIMO}: Far-Field or Near-Field?}},
  journal={IEEE Trans. Commun.},
  volume={70},
  number={4},
  pages={2663--2677},
  month={Apr.},
  year={2022}}

@article{NF,
  author={Wang, Zhaolin and Mu, Xidong and Liu, Yuanwei},
  title={{Near-Field Integrated Sensing and Communications}},
  journal={IEEE Commun. Lett.}, volume={27}, number={8}, pages={2048--2052}, year={2023}}
